\documentclass[journal]{IEEEtran}
\raggedbottom
\makeatletter
\newcommand*{\rom}[1]{\expandafter\@slowromancap\romannumeral #1@}
\makeatother
\usepackage{amsmath, amssymb, bm, cite, epsfig, psfrag, mathtools, amsthm}

\usepackage{graphicx}
\usepackage{dblfloatfix}

\usepackage{array}
\usepackage{lipsum}
\usepackage{array}
\usepackage{ragged2e}
\usepackage{cite}
\newcolumntype{P}[1]{>{\centering\hspace{0pt}}p{#1}}
\newcolumntype{M}[1]{>{\centering\hspace{0pt}}m{#1}}
\newcolumntype{L}{>{\centering\arraybackslash}m{3cm}}
\usepackage[font=small]{caption}
\usepackage{subcaption}
\usepackage{epstopdf}
\usepackage{longtable}
\usepackage{supertabular,booktabs}
\usepackage{capt-of}
\captionsetup[subfigure]{labelformat = parens, labelsep = space, font = small}
\usepackage{bbm}
\usepackage{multirow}
\usepackage{multicol}
\usepackage[usenames,dvipsnames]{xcolor}
\renewcommand{\arraystretch}{1.5}
\usepackage{etoolbox}
\usepackage{pbox}
\usepackage{fixltx2e}
\usepackage{algorithm}
\usepackage{algpseudocode}
\usepackage{colortbl}
\usepackage{fancyhdr}
\usepackage{atbegshi}


\usepackage{tabu}

\setlength{\textfloatsep}{10pt}

\pagestyle{fancy}
\fancyhead[C]{Confidential}
\usepackage{bm}
\newtoggle{conference}
\togglefalse{conference} 
\interdisplaylinepenalty=2500

\graphicspath{{figures/}}

\def\diag{\mathop{\mathrm{diag}}}

\newtheorem{theorem}{Theorem}
\newtheorem{lemma}{Lemma}

\setlength{\unitlength}{1mm}

\fancyhf{}

\fancypagestyle{plain}{
	\fancyhf{}
	\fancyhead[C]{Conference on \LaTeX}             
	\fancyfoot[L]{This is a notice}        

}

\newcommand*{\balancecolsandclearpage}{%
	\close@column@grid
	\cleardoublepage
	\twocolumngrid
}


\allowdisplaybreaks
\begin{document}
\makeatletter
\makeatother
\title{A Scalable and Energy Efficient IoT System Supported by Cell-Free Massive MIMO}

\author{Hangsong~Yan,~\IEEEmembership{Student~Member,~IEEE,} Alexei~Ashikhmin,~\IEEEmembership{Fellow,~IEEE,} Hong~Yang
	
\thanks{Hangsong Yan is with NYU Wireless, the Department of Electrical and Computer Engineering, New York University, Brooklyn,
		NY, 11201 USA e-mail: hy942@nyu.edu} 
\thanks{Alexei Ashikhmin and Hong Yang are with Nokia Bell Labs, Murray Hill, NJ, 07974 USA e-mail: \{alexei.ashikhmin, h.yang\}@nokia-bell-labs.com}}
\maketitle

\begin{abstract}
An IoT (Internet of things) system supports a massive number of IoT devices wirelessly. We show how to use Cell-Free Massive MIMO (multiple-input and multiple-output) to provide a scalable and energy efficient IoT system.  We employ optimal linear estimation with random pilots to acquire CSI (channel state information) for MIMO precoding and decoding. In the uplink, we employ optimal linear decoder and utilize RM (random matrix) theory to obtain two accurate SINR (signal-to-interference plus noise ratio) approximations involving only large-scale fading coefficients. We derive several max-min type power control algorithms based on both exact SINR expression and RM approximations. Next, we consider the power control problem for downlink (DL) transmission. To avoid solving a time-consuming quasi-concave problem that requires repeat tests for the feasibility of a SOCP (second-order cone programming) problem, we develop a neural network (NN) aided power control algorithm that results in 30 times reduction in computation time. This power control algorithm leads to scalable Cell-Free Massive MIMO networks in which the amount of computations conducted by each AP does not depend on the number of network APs.

Both UL and DL power control algorithms allow visibly improve the system spectral efficiency (SE) and, more importantly, lead to  multi-fold improvements in Energy Efficiency (EE), which is crucial for IoT networks.
\end{abstract}

\begin{IEEEkeywords}
	IoT, Scalable, Energy Efficiency, Cell-free, Massive MIMO.
\end{IEEEkeywords}

\section{Introduction}\label{Introduction}
Realizations of internet of things (IoT) is a hot and developing topic for both industry and academia \cite{Chettri2020Compre, Palattella_IoT_2016, Akpakwu_Sruvey_2018}. With the huge benefits of IoT, a whole set of distinctive challenges is also exposed to the wireless physical layer design. These challenges include hyper-connectivity and low latency (the number of wirelessly connected IoT devices will increase exponentially fast and the number of devices served simultaneously is also large), high energy efficiency (EE) and spectral efficiency (SE) (the served devices are in low transmit power regime while having possible high mobility) as well as sporadic transmission.

Significant work was conducted on physical layer optimization of IoT networks. In particular, in~\cite{Wang2019User} the authors proposed a hybrid nonorthogonal multiple access (NOMA) framework to support the tele-traffic demand of IoT, in~\cite{Yang2018Energy} and \cite{Ansere2020Optimal} the energy efficient resource allocation problems for IoT networks are studied, and in ~\cite{chu2017wireless}, the authors investigated a wirelessly powered sensor network and maximized the system sum throughput via energy beamforming and time allocation designs.

For drastic improvements of IoT network physical layer performance, Massive multiple-input and multiple-output (mMIMO) technique is expected to be one of the best candidates to tackle the challenges faced by IoT systems due to its distinctive advantages such as high SE, high EE, and scalability~\cite{Ngo_2013_Energy, marzetta2016fundamentals}. Some works on cellular mMIMO supported IoT systems are reported and summarized in \cite{bana2019massive} and an analysis on wirelessly powered cell-free (CF) IoT is provided in~\cite{Wang2020Wirelessly} where a joint optimization of uplink (UL) and downlink (DL) power control is also introduced. In \cite{Liu2018Massive1} and \cite{Liu2018Massive2}, the authors provide a framework for user activity detection and channel estimation with a cellular base
station (BS) equipped with a large number of antennas, and characterize the achievable uplink rate. The user activity detection is based on assigning to each user a unique pilot, which serves as the user identifier. These pilots then are used as columns of a sensing matrix. Active users synchronously send their pilots and a BS receives a linear combination of these pilots. Next, the BS runs a compressive sensing detection algorithm for the sensing matrix and identifies the pilots that occur as terms in the linear combination. These pilots, in their turn, reveal the active users. 

In this paper, we propose to support IoT systems with CF mMIMO~\cite{Ngo_17_Cellfree}.
We adopt the pilot assignment method introduced in \cite{Liu2018Massive1} and \cite{Liu2018Massive2} in our work (i.e., each IoT device is assigned a unique pilot) and assume perfect detection of active devices.
The contributions of this work are the following. For UL transmission, linear minimum mean square error (LMMSE) channel estimation and MIMO MMSE receiver are adopted in our design to adapt to IoT scenarios where devices usually have a small transmit power (e.g., 20 mW). Accurate and simple UL signal-to-noise-plus-interference ratio (SINR) approximations incorporating LMMSE channel estimation and MIMO MMSE receiver are derived based on random matrix (RM) theory. Efficient and flexible max-min power control algorithms are designed for both exact and RM SINRs, which can achieve both high SE and EE. To further increase the EE of IoT systems, target rate power control algorithms are designed for both exact and RM SINRs, where a predefined UL common per-device rate can be achieved by all served devices. Simulation results show that the designs proposed can obtain huge SE and EE improvements compared with sub-optimal designs and full power transmission schemes, respectively.

For DL CF mMIMO IoT systems, a neural network (NN) approach is introduced to simplify DL max-min power control. By predicting the normalized transmit power for every access point (AP) under optimal max-min power control, DL max-min power control is converted from a high-complexity quasi-concave problem to a low-complexity convex optimization problem. With the aid of the NN prediction, we further develop a scalable power control algorithm that works for very large areas and has very low complexity. Simulation results show high prediction accuracy of the proposed NN approach and significant EE improvements by the scalable power control algorithm compared with full power transmission schemes.

The organization of this paper is as follows. A CF mMIMO supported IoT system model incorporating the unique pilot assignment and LMMSE channel estimations is given in Section \ref{System Model}.
Section \ref{Uplink Transmission} considers UL transmission under a linear optimal MIMO receiver and two RM SINR approximations are presented. UL Max-min and target rate power control algorithms are introduced in Section \ref{Uplink Power Control} and Section \ref{Uplink Simulation Results} shows the UL simulation results. DL transmission under LMMSE channel estimation and conjugate beamforming (CB) precoding is considered in Section \ref{Downlink Transmission}. Section \ref{Downlink Power Control} shows simplified DL max-min power control and scalable DL power control algorithms which are both aided by NN. DL simulation results are also provided in this Section. The conclusion of the paper is provided in Section \ref{Conclusion}. 

Notation: boldface upper- and lower-case letters denote matrices and column vectors, respectively. $(\cdot)^{T}$ $(\cdot)^{*}$, and $(\cdot)^{H}$ denote transpose, conjugate, and conjugate transpose operations, respectively. $z \sim \mathcal{CN}(0, \sigma^2)$ denotes a circularly symmetric complex Gaussian random variable with zero mean and variance $\sigma^2$. $\mathbf{I}_M$ denotes an $M \times M$ identity matrix, $\text{tr}(\cdot)$ denotes trace operator of a matrix, and $\|\cdot\|_2$ denotes the Euclidean norm operation. $\diag\{\mathbf{v}\}$ denotes generating a square diagonal matrix with the elements of vector $\mathbf{v}$ on the main diagonal and $\diag\{\mathbf{A}\}$ denotes generating a column vector of the main diagonal elements of $\mathbf{A}$.

\section{System Model}\label{System Model}
We consider an IoT system supported by CF mMIMO. $M$ APs are uniformly distributed in a wide serving area, they cooperate with each other to serve $\bar{K}$ IoT devices. We denote the number of active devices at one moment as $K$ and assume that the $K$ active devices are randomly chosen from all $\bar{K}$ IoT devices with $M >> K$. Note that $\bar{K}$ is the total number of IoT devices served in this area and is much larger than $K$ (i.e., $\bar{K} >> K$). In our system, orthogonal frequency-division multiplexing (OFDM) is used and we assume a flat-fading channel model for each OFDM subcarrier. For a given subcarrier, the {\em channel coefficient} $g_{mk}$ between the $m$-th AP and the $k$-th device is modeled as:
\begin{equation}
	\label{eq:ch_coeffi}
	g_{mk} = \sqrt{\beta_{mk}}h_{mk},
\end{equation}
where $\beta_{mk}, m = 1,..., M, k = 1,..., K$ are the {\em large-scale fading coefficients} which include path loss and shadow fading. $h_{mk}, m = 1,..., M, k = 1,..., K$ are {\em small-scale fading coefficients} with $i.i.d.\ \mathcal{CN}(0,1)$ and stay constant during the coherence interval of length $\tau_c$ measured in the number of OFDM symbols. As in \cite{Liu2018Massive1} and  \cite{ Liu2018Massive2}, we assign a unique pilot $\boldsymbol{\psi}_k$ to each IoT device. During the pilot transmission, pilots are synchronously transmitted by active devices with the {\em pilot length} being $\tau$. Denote $\sqrt{\tau}\boldsymbol{\psi}_{k} \in \mathbb{C}^{\tau\times 1}$ as the pilot transmitted by the $k$-th device, where $||\boldsymbol{\psi}_{k}||_{2} = 1$, the received pilot signal at AP sides are given as:
\begin{equation}
	\label{eq:re_pilot}
	\mathbf{Y} = [\mathbf{y}_{1}\,\mathbf{y}_{2}\, ...\, \mathbf{y}_{m}] = \sqrt{\tau\rho_{p}}\boldsymbol{\Psi}\mathbf{G}^{T} + \mathbf{W},
\end{equation}
\begin{equation}
	\mathbf{y}_{m} = \sqrt{\tau\rho_{p}}\boldsymbol{\Psi}\mathbf{g}_{[m]} + \mathbf{w}_{m},
\end{equation}
where $\rho_{p}$ is the normalized {\em pilot signal-to-noise ratio} (SNR) of each pilot symbol, $\boldsymbol{\Psi} = [\boldsymbol{\psi}_{1}\,\boldsymbol{\psi}_{2}\,...\,\boldsymbol{\psi}_{K}] \in \mathbb{C}^{\tau \times K}$ are {\em pilot matrix} for $K$ devices, $\mathbf{G}^{T} = [\mathbf{g}_{[1]}\,\mathbf{g}_{[2]}\,...\,\mathbf{g}_{[M]}] \in \mathbb{C}^{K\times M}$ are the {\em channel coefficient matrix}, here $\mathbf{g}_{[m]} = [g_{m1},g_{m2},...,g_{mK}]^{T} \in \mathbb{C}^{K\times 1}$. $\mathbf{W}\in \mathbb{C}^{\tau \times M}$ are noise matrix with $i.i.d.\ \mathcal{CN}(0,1)$ components. Optimal channel estimation (i.e., LMMSE) are applied at the AP sides and the estimated channel coefficients at the $m$-th AP have the form:
\begin{equation}
\begin{aligned}
	\label{eq:est_g}
	\hat{\bf g}_{[m]} = \sqrt{\tau\rho_{p}}\mathbf{B}_{m}\boldsymbol{\Psi}^{H}\left(\tau\rho_{p}\boldsymbol{\Psi}\mathbf{B}_{m}\boldsymbol{\Psi}^{H} + \mathbf{I}_{\tau}\right)^{-1}\mathbf{y}_{m},\\
	\hat{g}_{mk} = \sqrt{\tau\rho_{p}}\beta_{mk}\boldsymbol{\psi}_{k}^{H}\left(\tau\rho_{p}\boldsymbol{\Psi}\mathbf{B}_{m}\boldsymbol{\Psi}^{H} + \mathbf{I}_{\tau}\right)^{-1}\mathbf{y}_{m},
\end{aligned}
\end{equation}
where $\mathbf{B}_{m} = \diag\{[\beta_{m1},\beta_{m2},...,\beta_{mK}]\}$. Then the variance of the estimated channel coefficient $\hat{g}_{mk}$ is equal to
\begin{equation}
\begin{aligned}
	\label{eq:gamma}
	\gamma_{mk} \triangleq \mathbb{E}\{|\hat{g}_{mk}|^{2}\} = 
	\sqrt{\tau\rho_{p}}\beta_{mk}\boldsymbol{\psi}_{k}^{H}\mathbf{a}_{mk},
\end{aligned}
\end{equation}
where $\mathbf{a}_{mk} = \sqrt{\tau\rho_{p}}\beta_{mk}\left(\tau\rho_{p}\boldsymbol{\Psi}\mathbf{B}_{m}\boldsymbol{\Psi}^{H} + \mathbf{I}_{\tau}\right)^{-1}\boldsymbol{\psi}_{k}$. 
Let $\tilde{g}_{mk}={ g}_{mk}-\hat{g}_{mk}$ be the channel estimation error. Since the estimation error and the estimate are orthogonal under LMMSE estimation, the variance of $\tilde{g}_{mk}$ is given by
\begin{equation}
	\label{eq:est_err_g}
	\mathbb{E}\{|\tilde{g}_{mk}|^{2}\} = \beta_{mk} - \gamma_{mk}.
\end{equation} 
 
\section{Uplink Transmission}\label{Uplink Transmission}

\subsection{Uplink Data Transmission}\label{sec:UL_Data_Trans}
Define $\eta_{k}$ as the {\em UL power coefficient} for the $k$-th device.
The signals received by the APs are
\begin{equation}
	\label{eq:rece_sig_UL}
	\mathbf{y}_{u} = \sqrt{\rho_{u}}\sum_{k=1}^{K}\sqrt{\eta_{k}}\mathbf{g}_{k}s_{k} + \mathbf{w}_{u},
\end{equation}
where $s_{k}$ is the {\em data symbol} transmitted by the $k$-th device, which satisfies $\mathbb{E}\{|s_k|^2\} = 1$, $\rho_{u}$ is the {\em normalized UL SNR}, $\mathbf{g}_{k} = [g_{1k}, g_{2k},..., g_{Mk}]^{T} \in \mathbb{C}^{M\times 1}$ is the channel vector between the $k$-th device and all APs, and $\mathbf{w}_{u} \in \mathbb{C}^{M\times 1}$ is the noise vector with $i.i.d.\ \mathcal{CN}(0,1)$ components.

We assume that APs cooperate to estimate $s_k$ by using a linear MIMO receiver, $\mathbf{v}_k$, and computing:
\begin{equation}
\label{eq:decode_symbol}
\begin{aligned}
	\hat{s}_{k} = \mathbf{v}_{k}^{H}\mathbf{y}_{u} =  \mathbf{v}_{k}^{H}\bigg(&\sqrt{\rho_{u}\eta_{k}}\hat{\mathbf{g}}_{k}s_{k} + \sqrt{\rho_{u}}\sum_{k'\neq k}\sqrt{\eta_{k'}}\hat{\mathbf{g}}_{k'}s_{k'} +\\ &\sqrt{\rho_{u}}\sum_{k'=1}^{K}\sqrt{\eta_{k'}}\tilde{\mathbf{g}}_{k'}s_{k'} + \mathbf{w}_{u}\bigg).
\end{aligned}
\end{equation}

Based on (\ref{eq:est_err_g}) and (\ref{eq:decode_symbol}), the UL SINR expression for the $k$-th data symbol is given as:
\begin{equation}
	\label{eq:SINR_k_UL}
	\text{SINR}_{k}^{u}(\boldsymbol{\eta}) = \frac{\rho_{u}\eta_{k}\mathbf{v}_{k}^{H}\hat{\mathbf{g}}_{k}\hat{\mathbf{g}}_{k}^{H}\mathbf{v}_{k}}{\mathbf{v}_{k}^{H}\left(\rho_{u}\sum_{k'\neq k}\eta_{k'}\hat{\mathbf{g}}_{k'}\hat{\mathbf{g}}_{k'}^{H} + \mathbf{D}\right)\mathbf{v}_{k}}, 
\end{equation}
where 
\begin{equation}
	\label{eq:def_D}
	\mathbf{D} = \rho_{u}\sum_{k'=1}^{K}\eta_{k'}(\mathbf{B}_{k'}-\boldsymbol{\Gamma}_{k'}) + \mathbf{I}_{M},
\end{equation}
$\mathbf{B}_{k'} \triangleq \diag\{[\beta_{1k'}, \beta_{2k'},..., \beta_{Mk'}]\}$, and $\mathbf{\Gamma}_{k'} \triangleq \diag\{[\gamma_{1k'}, \gamma_{2k'},...,\gamma_{Mk'}]\}$. Based on (\ref{eq:decode_symbol}) and (\ref{eq:SINR_k_UL}), and using Rayleigh-Ritz theorem, we find the optimal (LMMSE) choice of ${\bf v}_k$:
\begin{equation}
	\label{eq:decoder_MMSE}
	\mathbf{v}_{k}^{\text{MMSE}} = \sqrt{\rho_{u}\eta_{k}}\left(\rho_{u}\sum_{k'=1}^{K}\eta_{k'}\hat{\mathbf{g}}_{k'}\hat{\mathbf{g}}_{k'}^{H} + \mathbf{D}\right)^{-1}\hat{\mathbf{g}}_{k}.
\end{equation}
Substituting (\ref{eq:decoder_MMSE}) into (\ref{eq:SINR_k_UL}), we then obtain the corresponding SINR expression:
\begin{equation}
\begin{aligned}
	\label{eq:SINR_k_UL_MMSE}
	&\text{SINR}_{k}^{u, \text{MMSE}}(\boldsymbol{\eta}) = \\ &\frac{\rho_{u}\eta_{k}\hat{\mathbf{g}}_{k}^{H}\left(\rho_{u}\sum_{k'=1}^{K}\eta_{k'}\hat{\mathbf{g}}_{k'}\hat{\mathbf{g}}_{k'}^{H} + \mathbf{D}\right)^{-1}\hat{\mathbf{g}}_{k}}{1-\rho_{u}\eta_{k}\hat{\mathbf{g}}_{k}^{H}\left(\rho_{u}\sum_{k'=1}^{K}\eta_{k'}\hat{\mathbf{g}}_{k'}\hat{\mathbf{g}}_{k'}^{H} + \mathbf{D}\right)^{-1}\hat{\mathbf{g}}_{k}}.
\end{aligned}
\end{equation}

\subsection{RM based SINR Approximations}
In this section, two SINR approximations of (\ref{eq:SINR_k_UL_MMSE}) are derived based on random matrix (RM) theory \cite{Hoydis_13_UL/DL}, \cite{Wagner_12_MISO}. 
\subsubsection{RM Approximation 1}
\begin{equation}
	\label{eq:RMT_Approx1}
	\text{SINR}_{k}^{u,\text{AP1}}(\boldsymbol{\eta}) = \frac{\rho_{u}\eta_{k}\text{tr}\left(\boldsymbol{\Gamma}_{k}\mathbf{T}\right)}{M},
\end{equation}
where
\begin{equation}
\label{eq:T_Approx1}
\mathbf{T} = \left(\frac{\rho_{u}}{M}\sum_{k'=1}^{K}\frac{\eta_{k'}\mathbf{\Gamma}_{k'}}{1 + e_{k'}} + \frac{\mathbf{D}}{M}\right)^{-1},
\end{equation}
\begin{equation}
\label{eq:e_k}
e_{k'} = \lim_{t\rightarrow \infty} e_{k'}^{(t)}\ \text{with}\ e_{k'}^{(0)} = M, \;\;\forall k',
\end{equation}
\begin{equation}
\label{eq:e_k_t}
e_{k'}^{(t)} = \frac{\rho_{u}\eta_{k'}}{M}\text{tr}\left(\mathbf{\Gamma}_{k'}\Bigg(\frac{\rho_{u}}{M}\sum_{j=1}^{K}\frac{\eta_{j}\mathbf{\Gamma}_{j}}{1 + e_{j}^{(t-1)}} + \frac{\mathbf{D}}{M}\Bigg)^{-1}\right), \forall k'.
\end{equation}

\subsubsection{RM Approximation 2}
\begin{equation}
\label{eq:RMT_Approx2}
\text{SINR}_{k}^{u,\text{AP2}}(\boldsymbol{\eta}) = \frac{\rho_{u}\eta_{k}\text{tr}\left(\mathbf{\Gamma}_{k}\mathbf{T}_{k}\right)}{M},
\end{equation}
where
\begin{equation}
\label{eq:T_k}
\mathbf{T}_{k} = \left(\frac{\rho_{u}}{M}\sum_{k'\neq k}^{K}\frac{\eta_{k'}\mathbf{\Gamma}_{k'}}{1 + e_{k,k'}} + \frac{\mathbf{D}}{M}\right)^{-1},
\end{equation}
\begin{equation}
e_{k,k'} = \lim_{t\rightarrow \infty}e_{k,k'}^{(t)}\ \text{with}\ e_{k,k'}^{(0)} = M, \;\; \forall k',
\end{equation}
\begin{equation}
e_{k,k'}^{(t)}=\frac{\rho_{u}\eta_{k'}}{M}\text{tr}\left(\mathbf{\Gamma}_{k'}\Bigg(\frac{\rho_{u}}{M}\sum_{j\neq k}^{K}\frac{\eta_{j}\mathbf{\Gamma}_{j}}{1 + e_{k,j}^{(t-1)}} + \frac{\mathbf{D}}{M}\Bigg)^{-1}\right), \forall k'.
\end{equation}
The derivations of RM Approximation 1 and 2 are given in Appendix \ref{Proof for RM Approximations}. It is noted that both RM Approximation 1 and 2 involve only large-scale fading coefficients. It is also important to note that compared with the RM approximation derived in \cite{nayebi2016MMSE} for the case of reuse of orthogonal pilots, RM Approximation 1 and 2 have much simpler form, which allows for low complexity and infrequent power control algorithms presented in the next section.   

\section{Uplink Power Control}\label{Uplink Power Control}
In this section, we consider UL power control with per-device power constraint. The UL power coefficients defined in Section \ref{sec:UL_Data_Trans} should satisfy the following power constraint: $0 \leq \mathbb{E}\{|x_k|^2\} \leq \rho_u, k = 1,2,...,K$ where $x_k = \sqrt{\rho_u}\eta_ks_k$ is the signal transmitted by the $k$-th device. Since $\mathbb{E}\{|s_k|^2\} = 1$, the UL power constraint can be rewritten as $0 \leq \eta_k \leq 1, \forall k$.

\subsection{Max-min power control by exact SINR}\label{sec:max_min_exact_SINR} Based on the power constraint given above, basic UL max-min power control with random pilots under an IoT system can be formulated as the problem shown below:
\begin{equation}
	\label{eq:max_min_prbm_UL}
	\begin{aligned}
		&\max_{\boldsymbol{\eta}}\min_{k=1,...,K}\text{SINR}^{u}_{k}\\
		&\text{s.t.}\; 0 \leq \eta_{k} \leq 1,\, k = 1,..., K,
	\end{aligned}
\end{equation}
where $\text{SINR}^u_k$ in (\ref{eq:max_min_prbm_UL}) can be replaced by both $\text{SINR}^{u,\text{MMSE}}_k$ given in (\ref{eq:SINR_k_UL_MMSE}) and the RM SINR Approximations.  
In this section, we introduce an iterative weighted max-min power control algorithm designed based on exact SINR given in (\ref{eq:SINR_k_UL_MMSE}). In the algorithm, a {\em rate weighting} vector $\mathbf{u} = [u_{1}, u_{2},..., u_{K}]^{T} \in \mathbb{R}_{0^{+}}^{K}$ constrained by $\|\mathbf{u}\|_{2} = 1$ 
is incorporated. Vector $\mathbf{u}$ can be used to drop some devices under poor channel condition by assigning very small weights to these devices. If all the devices are required to achieve the same rate, $u_{k} = 1/\sqrt{K},\ \forall k$. On the other hand, a {\em power weighting} vector $\boldsymbol{\nu} = [\nu_{1}, \nu_{2}, ..., \nu_{K}]^{T} \in \mathbb{R}_{0^{+}}^{K}$ is also included and the weighted normalized maximum transmit power of the $k$-th device is defined as $\rho'_{u,k} \triangleq \rho_{u}\nu_{k}$. From (\ref{eq:SINR_k_UL_MMSE}), we observed that the task of achieving a uniform data rate for all served devices was equivalent for the term, $\rho_{u}\eta_{k}\hat{\mathbf{g}}_{k}^{H}\left(\rho_{u}\sum_{k'=1}^{K}\eta_{k'}\hat{\mathbf{g}}_{k'}\hat{\mathbf{g}}_{k'}^{H} + \mathbf{D}\right)^{-1}\hat{\mathbf{g}}_{k}$, in (\ref{eq:SINR_k_UL_MMSE}) having the same value for every served device. It is also observed that the term, $\rho_{u}\sum_{k'=1}^{K}\eta_{k'}\hat{\mathbf{g}}_{k'}\hat{\mathbf{g}}_{k'}^{H} + \mathbf{D}$, is included in the SINR expression for each active device (i.e., independent with the device index). With these two observations, an iterative max-min power control algorithm based on exact SINR can be designed.
Here, we define matrix $\mathbf{J}_k,\ k = 1, 2,..., K$ as
\begin{equation}
	\label{eq:J_matrix}
	\mathbf{J}_{k} \triangleq \hat{\mathbf{g}}_{k}\hat{\mathbf{g}}_{k}^{H} + \mathbf{B}_{k} - \boldsymbol{\Gamma}_{k}.
\end{equation}
The details of the algorithm incorporating vectors $\mathbf{u}$ and $\boldsymbol{\nu}$ are given below in {\bf Algorithm} \ref{Algorithm 1} where (\ref{eq:eta_1}) guarantees that the power constraints are satisfied. 
\begin{algorithm}
	\caption{Max-min Power Control - Exact SINR}
	\label{Algorithm 1}
	\begin{algorithmic}[1]
		\item Initialize vectors $\mathbf{u}$ and $\boldsymbol{\nu}$ with predefined setting. Initialize $\eta_{k}^{(0)} = 1, \forall k$, and $d_{k}^{(0)}, \forall k$ as
		\begin{equation}
		\label{eq:d_k_0}
		d_{k}^{(0)} = \rho'_{u,k}\hat{\mathbf{g}}_{k}^{H}\left(\sum_{k'=1}^{K}\rho'_{u,k'}\eta_{k'}^{(0)}\mathbf{J}_{k'} + \mathbf{I}_{M}\right)^{-1}\hat{\mathbf{g}}_{k}.
		\end{equation}
		Set $n = 0$ and choose a tolerance $\epsilon > 0$.
		\item Compute $\alpha = \min_{k'}(d_{k'}^{(n)}/u_{k'}), k' = 1,..., K$ and Update $d_{k}^{(n+1)}, \forall k$ as below
		\begin{equation}
		\begin{aligned}
		\label{eq:d_updation}
		d_{k}^{(n+1)} = \rho'_{u,k}\hat{\mathbf{g}}_{k}^{H}\left(\alpha\sum_{k'=1}^{K}\frac{\rho'_{u,k'}u_{k'}}{d_{k'}^{(n)}}\mathbf{J}_{k'} + \mathbf{I}_{M}\right)^{-1}\hat{\mathbf{g}}_{k}.
		\end{aligned}
		\end{equation}
		\item Stop if $\max_{k}|d_{k}^{(n+1)} - d_{k}^{(n)}|\leq \epsilon, k = 1,..., K$, set $d_{k} = d_{k}^{(n+1)}, \forall k$, and the power control coefficients, $\eta_{k}, \forall k$ are given by 
		\begin{equation}
		\label{eq:eta_1}
		\eta_{k} = \frac{\min_{k'}\left(d_{k'}/u_{k'}\right)}{d_{k}/u_{k}}, \,  k' = 1,...,K.
		\end{equation}
		Otherwise, set $n = n + 1$ and go to step 2.
	\end{algorithmic}
\end{algorithm}
\begin{theorem}\label{converg_algm1}
	Given $\{\hat{g}_{mk}\}$, $\{\beta_{mk}\}$, $\{\gamma_{mk}\}$, $\rho_u$, $\mathbf{u}$, and $\boldsymbol{\nu}$, if a max-min power control is feasible, the iteration in Step 2 of Algorithm {\ref{Algorithm 1}} converges to a unique solution. The $\{\eta_{k}\}$ given by step 3 of Algorithm \ref{Algorithm 1} realize this max-min power control.
\end{theorem}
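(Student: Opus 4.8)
The plan is to recognize Step 2 as a \emph{normalized standard-interference-function iteration} for SINR balancing and to invoke the associated fixed-point convergence theory. First I would collapse the update (\ref{eq:d_updation}) into a single map. Writing $\mathbf{C}(\boldsymbol{\eta}) \triangleq \sum_{k'}\rho'_{u,k'}\eta_{k'}\mathbf{J}_{k'} + \mathbf{I}_M$ and $G_k(\boldsymbol{\eta}) \triangleq \rho'_{u,k}\hat{\mathbf{g}}_k^H \mathbf{C}(\boldsymbol{\eta})^{-1}\hat{\mathbf{g}}_k$, substituting $\alpha = \min_{k'}(d_{k'}^{(n)}/u_{k'})$ shows that $\alpha\,\rho'_{u,k'}u_{k'}/d_{k'}^{(n)} = \rho'_{u,k'}\eta_{k'}^{(n)}$ with $\eta_{k'}^{(n)} = \min_j(d_j^{(n)}/u_j)\,u_{k'}/d_{k'}^{(n)}$, so (\ref{eq:d_updation}) reads $d_k^{(n+1)} = G_k(\boldsymbol{\eta}^{(n)})$. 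Two structural facts follow: the $\mathbf{d}$-map is invariant under positive scaling of $\mathbf{d}^{(n)}$ (degree-zero homogeneous), and the induced power iterates obey $\max_k\eta_k^{(n)} = 1$ with $\eta_{k'}^{(n)} \le 1$, so the constraint $0\le\eta_k\le 1$ is automatically met and the iteration is equivalent to the normalized update $\boldsymbol{\eta}^{(n+1)} = \mathbf{r}(\boldsymbol{\eta}^{(n)})/\|\mathbf{r}(\boldsymbol{\eta}^{(n)})\|_\infty$, where $r_k(\boldsymbol{\eta}) \triangleq u_k/G_k(\boldsymbol{\eta})$.

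Second, I would verify that $\mathbf{r}$ is a standard interference function. Positivity is immediate since $\mathbf{C}(\boldsymbol{\eta})\succ \mathbf{0}$ and $\hat{\mathbf{g}}_k\neq\mathbf{0}$ give $G_k>0$, hence $r_k>0$ (restricting to the support $u_k>0$; devices with $u_k=0$ are simply dropped). For monotonicity, note from (\ref{eq:est_err_g}) that $\mathbf{B}_{k'}-\boldsymbol{\Gamma}_{k'}\succeq\mathbf{0}$, so each $\mathbf{J}_{k'}=\hat{\mathbf{g}}_{k'}\hat{\mathbf{g}}_{k'}^H+\mathbf{B}_{k'}-\boldsymbol{\Gamma}_{k'}\succeq\mathbf{0}$ in (\ref{eq:J_matrix}); thus $\boldsymbol{\eta}\ge\boldsymbol{\eta}'$ makes $\mathbf{C}(\boldsymbol{\eta})\succeq\mathbf{C}(\boldsymbol{\eta}')$ in the Loewner order, so $\mathbf{C}(\boldsymbol{\eta})^{-1}\preceq\mathbf{C}(\boldsymbol{\eta}')^{-1}$, $G_k$ is nonincreasing, and $r_k$ is nondecreasing in $\boldsymbol{\eta}$. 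Scalability is where the noise term $\mathbf{I}_M$ inside $\mathbf{D}$ (see (\ref{eq:def_D})) does the work: for $t>1$, $t\mathbf{C}(\boldsymbol{\eta})-\mathbf{C}(t\boldsymbol{\eta}) = (t-1)\mathbf{I}_M\succ\mathbf{0}$, so $\mathbf{C}(t\boldsymbol{\eta})\prec t\,\mathbf{C}(\boldsymbol{\eta})$, whence $G_k(t\boldsymbol{\eta}) > \tfrac1t G_k(\boldsymbol{\eta})$ and therefore $r_k(t\boldsymbol{\eta}) < t\,r_k(\boldsymbol{\eta})$, the required strict scalability.

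Third, I would bring in feasibility and conclude. Feasibility of the max-min problem (\ref{eq:max_min_prbm_UL}) means there is a $\boldsymbol{\eta}\in(0,1]^K$ giving every weighted SINR a common positive value; by the theory of standard interference functions this guarantees that $\mathbf{r}$ has a (unique, by scalability) normalized fixed point $\boldsymbol{\eta}^\star$ with $\mathbf{r}(\boldsymbol{\eta}^\star)=\lambda\boldsymbol{\eta}^\star$, and that the normalized iteration converges to $\boldsymbol{\eta}^\star$ from any positive start, in particular from $\eta_k^{(0)}=1$, the monotone scalable map squeezing the iterates onto $\boldsymbol{\eta}^\star$. Because the $\mathbf{d}$-map is degree-zero homogeneous, convergence of the directions $\boldsymbol{\eta}^{(n)}\to\boldsymbol{\eta}^\star$ forces $d_k^{(n+1)}=G_k(\boldsymbol{\eta}^{(n)})\to G_k(\boldsymbol{\eta}^\star)=d_k^\star$, so Step 2 converges to a unique $\{d_k^\star\}$. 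Finally, at the fixed point the balanced quantities $\eta_k^\star d_k^\star/u_k$ all equal $\alpha^\star$ with the bottleneck device at full power; since each $\text{SINR}_k^{u,\text{MMSE}}$ in (\ref{eq:SINR_k_UL_MMSE}) is a strictly increasing function of exactly this quantity and monotone in the powers, no reallocation can raise the minimum weighted SINR, so the $\{\eta_k\}$ returned by (\ref{eq:eta_1}) realize the max-min power control.

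The main obstacle I anticipate is the scalability/uniqueness step together with the bookkeeping of the $\min$--$\max$ normalization: one must show cleanly that the strict inequality $\mathbf{C}(t\boldsymbol{\eta})\prec t\mathbf{C}(\boldsymbol{\eta})$ supplied by $\mathbf{I}_M$ propagates through the matrix inverse and the $\ell_\infty$ normalization to give a genuine contraction (e.g.\ in Hilbert's projective metric) and a single fixed point, and to convert the qualitative hypothesis ``a max-min power control is feasible'' into the boundedness of the iterates required by the convergence theorem. The positivity and monotonicity steps are routine once $\mathbf{J}_{k'}\succeq\mathbf{0}$ is recorded.
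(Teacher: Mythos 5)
Your proposal is correct in substance and rests on the same key tool as the paper --- the framework of standard interference functions (Yates) --- but it organizes the argument in a genuinely different way. The paper freezes $\alpha$ within each iteration and verifies positivity, monotonicity, and scalability for the map $\mathbf{f}(\mathbf{d})$ of (\ref{eq:SIF_d}) acting on $\mathbf{d}$-space (monotonicity via an explicit derivative computation $\partial f_k(\mathbf{d})/\partial d_{k''}>0$, scalability via an eigendecomposition of $\bar{\mathbf{J}}$), and then deals with the per-iteration update of $\alpha$ by a separate, largely informal argument: the normalization forces $\max_k\eta_k^{(n)}=1$, which is asserted to drive $\alpha^{(n)}\rightarrow\alpha^{\star}$ and hence $\boldsymbol{\eta}^{(n)}\rightarrow\boldsymbol{\eta}^{\star}$. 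You instead absorb the $\alpha$-update into the map itself, observing that Step 2, including the recomputation of $\alpha$, is exactly the $\ell_\infty$-normalized fixed-point iteration of the single power-space map $r_k(\boldsymbol{\eta})=u_k/G_k(\boldsymbol{\eta})$; your verifications are cleaner than the paper's (Loewner-order monotonicity from $\mathbf{J}_{k'}\succeq\mathbf{0}$, and strict scalability from the identity $t\,\mathbf{C}(\boldsymbol{\eta})-\mathbf{C}(t\boldsymbol{\eta})=(t-1)\mathbf{I}_M$, avoiding the paper's eigendecomposition manipulation). What each route buys: the paper's frozen-$\alpha$ decomposition lets it quote Yates' convergence theorem verbatim for the inner iteration, at the cost of an informal outer argument for the coupled normalization; your formulation matches the algorithm as actually executed, at the cost that Yates' theorem (which concerns fixed points $\mathbf{r}(\boldsymbol{\eta})=\boldsymbol{\eta}$, not normalized eigenvectors $\mathbf{r}(\boldsymbol{\eta}^{\star})=\lambda\boldsymbol{\eta}^{\star}$) no longer applies directly, so you must invoke nonlinear Perron--Frobenius / SIR-balancing theory, e.g.\ contraction in Hilbert's projective metric. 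The obstacle you flag at the end is therefore not an artifact of your route: it is precisely the step the paper itself treats only by assertion ($\alpha^{(n)}\rightarrow\alpha^{\star}$), so neither proof is fully rigorous at that point, and your version has the merit of making the delicate step explicit --- and of exhibiting, via (\ref{eq:d_updation}) and (\ref{eq:eta_1}), the exact equivalence $d_k^{(n+1)}=G_k(\boldsymbol{\eta}^{(n)})$ with $\max_k\eta_k^{(n)}=1$ that the paper only describes as ``two operations.''
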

\begin{proof}
	The proof of Theorem \ref{converg_algm1} is given in Appendix \ref{Convergence of Algorithm 1 and 3}.
\end{proof}

Note that each element of vector $\mathbf{u}$ can have different values. As mentioned earlier, when we would like to drop some devices under poor channel condition, vector $\mathbf{u}$ can be designed as follows. {\bf Algorithm} \ref{Algorithm 1} will be firstly run with $u_k=1/\sqrt{K},\ \forall k$, to obtain $\eta_{k}, \forall k$. We find $K_p$ devices with the largest power coefficients. These devices consume most of the energy and drag the rates of other devices down. Thus it looks natural to lower their rates by assigning their $u_k$ to some small value $u_p$. The value of $u_{k}$ for the remaining devices is set as $u_{g} > u_{p}$. The relationship between $u_{g}$ and $u_{p}$ is given by
\begin{equation}
\label{eq:relation_up_ug}
u_{p}^{2}K_{p} + u_{g}^{2}\left(K- K_{p}\right) = 1.
\end{equation} 
If we would like to virtually drop $K_p$ devices, we set $u_p$ as a very small number (e.g. $10^{-8}$). 

\subsection{Max-min power control by RM Approximation 1}
For RM Approximation 1 given in (\ref{eq:RMT_Approx1}) - (\ref{eq:e_k_t}), matrix $\mathbf{T}$ is included in the SINR expression for every active device and is independent with the device index, thus a max-min power control algorithm based on RM Approximation 1 can be designed.
Its detailed form incorporating the rate and power weighting vectors is given in {\bf Algorithm} \ref{Algorithm 2} below.
\begin{algorithm}
	\caption{Max-min Power Control - RM SINR}
	\label{Algorithm 2}
	\begin{algorithmic}[1]
	\item Initialize $\mathbf{u}$, $\boldsymbol{\nu}$ as predetermined, $\eta_{k}^{(0)} = 1, \forall k$, $\mathbf{D}^{(0)} = \sum_{k'=1}^{K}\rho'_{u,k'}\eta_{k'}^{(0)}(\mathbf{B}_{k'}-\boldsymbol{\Gamma}_{k'}) + \mathbf{I}_{M}$. Initialize $\mathbf{T}^{(0)} = \left(\frac{1}{M}\sum_{k'=1}^{K}\frac{\eta_{k'}^{(0)}\rho'_{u,k'}\mathbf{\Gamma}_{k'}}{1+e_{k'}} + \frac{\mathbf{D}^{(0)}}{M}\right)^{-1}$ where $e_{k'}, \forall k'$ are computed by (\ref{eq:e_k}) and (\ref{eq:e_k_v_t}) below
	\begin{equation}
	\label{eq:e_k_v_t}
		e_{k'}^{(t)} =  \frac{\rho'_{u,k'}\eta_{k'}^{(0)}}{M}\text{tr}\Bigg(\mathbf{\Gamma}_{k'}\Bigg(\frac{1}{M}\sum_{j=1}^{K}\frac{\eta_{j}^{(0)}\rho'_{u,j}\mathbf{\Gamma}_{j}}{1 + e_{j}^{(t-1)}} + \frac{\mathbf{D}^{(0)}}{M}\Bigg)^{-1}\Bigg)
	\end{equation}
	Set $n = 0$ and choose a tolerance $\epsilon > 0$.
	
	\item Compute $\alpha = \min_{k'}\text{tr}\left(\nu_{k'}\mathbf{\Gamma}_{k'}\mathbf{T}^{(n)}\right)/u_{k'},\ k' = 1, ..., K$, Update $\mathbf{T}^{(n+1)}$ as
	\begin{equation}
	\begin{aligned}
	\label{eq:T_update}
	&\mathbf{T}^{(n+1)} = \\ &\left(\frac{\alpha}{M}\sum_{k'=1}^{K}\frac{\rho_{u}u_{k'}}{\text{tr}\left(\mathbf{\Gamma}_{k'}\mathbf{T}^{(n)}\right)}\left(\mathbf{B}_{k'} - \frac{\xi_{k'}\mathbf{\Gamma}_{k'}}{1 + \xi_{k'}}\right) + \frac{\mathbf{I}_{M}}{M}\right)^{-1},
	\end{aligned} 
	\end{equation}
	where $\xi_{k'} = \frac{\rho_{u}\alpha u_{k'}}{M}$.
	
	\item Stop if $\|\mathbf{T}^{(n+1)} - \mathbf{T}^{(n)}\|_{2} \leq \epsilon$. Set $\mathbf{T} = \mathbf{T}^{(n+1)}$ and the power control coefficients, $\eta_{k}, \forall k$ are given by
	\begin{equation}
	\label{eq:eta_al_AP1}
	\eta_{k} = \frac{\min_{k'}\left(\text{tr}\left(\nu_{k'}\mathbf{\Gamma}_{k'}\mathbf{T}\right)/u_{k'}\right)}{\text{tr}\left(\nu_{k}\mathbf{\Gamma}_{k}\mathbf{T}\right)/u_{k}},\ k' = 1,..., K.
	\end{equation}
	Otherwise, set $n = n + 1$ and go to step 2.
	\end{algorithmic}	
\end{algorithm}
\begin{theorem}\label{theorem: converg_algm2} Given $\{\beta_{mk}\}$, $\{\gamma_{mk}\}$, $\rho_u$, $\mathbf{u}$, and $\boldsymbol{\nu}$, if a max-min power control is feasible, the iteration in Step 2 of Algorithm \ref{Algorithm 2} converges to a unique solution. The $\{\eta_{k}\}$ given by step 3 of Algorithm \ref{Algorithm 2} realize this max-min power control.
\end{theorem}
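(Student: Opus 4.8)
The plan is to mirror the argument behind Theorem~\ref{converg_algm1}, exploiting the fact that the RM SINR (\ref{eq:RMT_Approx1}) shares a single matrix $\mathbf{T}$ across all devices and is linear in $\eta_k$ for fixed $\mathbf{T}$. I would first characterize the max--min optimum. Since $\text{SINR}_k^{u,\text{AP1}} = \rho_u \eta_k\,\text{tr}(\boldsymbol{\Gamma}_k \mathbf{T})/M$ is increasing in $\eta_k$ and $\mathbf{T}$ is common to all devices, at any maximizer of the (rate-weighted) minimum SINR all the weighted SINRs must be equal; otherwise one could raise the power of a device that does not attain the minimum without lowering the objective. The assignment (\ref{eq:eta_al_AP1}) is precisely the rule that forces $\eta_k\,\text{tr}(\nu_k \boldsymbol{\Gamma}_k \mathbf{T})/u_k$ to be constant in $k$, and it automatically satisfies $0 \le \eta_k \le 1$ because the numerator is the minimum over $k'$. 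Hence proving the theorem reduces to showing that the recursion (\ref{eq:T_update}) converges to a unique $\mathbf{T}$ at which this balanced condition holds.

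The core is to recast (\ref{eq:T_update}) as a fixed-point map and place it in the standard-interference-function framework of Yates, using the scalar coordinates $t_k \triangleq \text{tr}(\boldsymbol{\Gamma}_k \mathbf{T})$. Positivity is immediate, since each $\mathbf{T}^{(n)}$ inverts a positive-definite matrix: $\mathbf{B}_{k'} - \frac{\xi_{k'}}{1+\xi_{k'}}\boldsymbol{\Gamma}_{k'} \succeq \mathbf{B}_{k'} - \boldsymbol{\Gamma}_{k'} \succeq \mathbf{0}$ (the last inequality is (\ref{eq:est_err_g})) and the additive $\mathbf{I}_M/M \succ \mathbf{0}$, so $t_k^{(n)}>0$ and moreover $\mathbf{T}^{(n)} \prec M\mathbf{I}_M$ is bounded. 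The elementary order facts I would build on are that matrix inversion is operator-antitone and that $\mathbf{T} \mapsto \text{tr}(\boldsymbol{\Gamma}_k \mathbf{T})$ is order-preserving because $\boldsymbol{\Gamma}_k \succeq \mathbf{0}$. Scalability is the cleaner of the two properties: replacing $\mathbf{t}$ by $c\,\mathbf{t}$ with $c>1$ leaves the weights $\alpha\rho_u u_{k'}/t_{k'}$ invariant (both $\alpha$ and $t_{k'}$ scale by $c$) and only enlarges $\frac{\xi_{k'}}{1+\xi_{k'}}\boldsymbol{\Gamma}_{k'}$, so the bracketed matrix decreases and $\mathbf{T}$ grows; the strict inequality $I(c\,\mathbf{t}) \prec c\,I(\mathbf{t})$ then follows from the unscaled $\mathbf{I}_M/M$ floor, which caps how much $\mathbf{T}$ can grow.

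Given these properties, Yates' theorem yields that a fixed point, if it exists, is unique and that the synchronous iteration converges to it from the positive initialization of Step~1; feasibility of the max--min problem supplies the existence of this fixed point. Combined with the first paragraph, the limiting $\{\eta_k\}$ from (\ref{eq:eta_al_AP1}) are SINR-balanced and hence solve (\ref{eq:max_min_prbm_UL}) with $\text{SINR}_k^{u,\text{AP1}}$. I expect the main obstacle to be the \emph{monotonicity} property, which is obscured by the normalization $\alpha = \min_{k'}\text{tr}(\nu_{k'}\boldsymbol{\Gamma}_{k'}\mathbf{T}^{(n)})/u_{k'}$: under a general (non-proportional) increase of $\mathbf{t}^{(n)}$ the minimizing $\alpha$ may rise while the individual weights $\rho_u u_{k'}/t_{k'}^{(n)}$ fall, so the sign of the net change in the bracketed matrix, and hence in each $t_k^{(n+1)}$, is not manifest. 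Resolving this will require tracking the coupled movement of $\alpha$ and the $\xi_{k'}$-dependent terms in the L\"owner order, much as in the proof of Theorem~\ref{converg_algm1}, and is the step where the matrix-valued iteration genuinely departs from the scalar one.
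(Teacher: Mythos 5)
Your skeleton---Yates' standard-interference-function framework applied to the recursion (\ref{eq:T_update}), plus the observation that (\ref{eq:eta_al_AP1}) balances the weighted SINRs---is the same as the paper's, and your positivity and scalability arguments are essentially sound. But the step you defer to the end, monotonicity of the map with the normalization $\alpha = \min_{k'}\mathrm{tr}\left(\nu_{k'}\boldsymbol{\Gamma}_{k'}\mathbf{T}^{(n)}\right)/u_{k'}$ \emph{embedded} in it, is not merely ``not manifest'': it is false, so no amount of L\"owner-order bookkeeping will recover it. To see this, increase only the coordinate $t_j$ that attains the minimum defining $\alpha$ (keeping it the argmin). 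For $k'=j$ the weight $\alpha\rho_u u_j/t_j = \rho_u\nu_j$ is unchanged while $\xi_j$ grows, so the $j$-th term of the bracketed sum shrinks; but for every $k'\neq j$ the term $\frac{\rho_u u_{k'}\alpha}{t_{k'}}\left(\beta_{mk'}-\frac{\xi_{k'}\gamma_{mk'}}{1+\xi_{k'}}\right)$, viewed as a function of $\alpha$ with $\xi_{k'}=\rho_u\alpha u_{k'}/M$ and $t_{k'}$ fixed, has derivative proportional to $\beta_{mk'}-\gamma_{mk'}\left(1-\frac{1}{(1+\xi_{k'})^2}\right) \geq \frac{\gamma_{mk'}}{(1+\xi_{k'})^2} > 0$, so all those $K-1$ terms grow without bound in $\alpha$. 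Whenever their aggregate growth exceeds the bounded decrease of the single $j$-term (e.g., for large $K$), the bracketed matrix increases and the output $\mathbf{T}^{(n+1)}$---hence every $t_k^{(n+1)}$---\emph{decreases}: an increase of the input lowers the output. The composite, $\alpha$-embedded map is therefore not a standard interference function, and Yates' theorem cannot be invoked for it; this is where your proof collapses.

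The paper escapes this by decoupling the two operations performed in Step 2. It proves the three interference-function properties for the map (\ref{eq:SIF_RM1}) with $\alpha$ (and hence $\xi_{k'}$) held as a positive \emph{constant}---in which case monotonicity is immediate, since raising $\mathbf{l}$ only lowers each weight $\rho_u u_{k'}/\mathrm{tr}\left(\boldsymbol{\Gamma}_{k'}\diag\{\mathbf{l}\}\right)$ while the matrices $\mathbf{B}_{k'}-\frac{\xi_{k'}\boldsymbol{\Gamma}_{k'}}{1+\xi_{k'}}$ stay put---and then treats the recomputation of $\alpha^{(n)}$ as a separate renormalization of the power coefficients (forcing $\max_k\eta_k^{(n)}=1$) that drives $\alpha^{(n)}$ to the optimal $\alpha^{\star}$, exactly as in the proof of Theorem \ref{converg_algm1}. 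Note in particular that the proof of Theorem \ref{converg_algm1} also never establishes monotonicity of a composite map with $\alpha$ inside, so your plan to resolve the obstacle ``much as in the proof of Theorem \ref{converg_algm1}'' points at a technique that does not exist there; the fixed-$\alpha$ decoupling is the idea you are missing. (A minor further difference: the paper's fixed-point variable is the vector of $M$ diagonal entries of $\mathbf{T}$ rather than your $K$ traces $t_k$; either parameterization would do, the substance of the gap is the treatment of $\alpha$.)
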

\begin{proof}
	The proof of Theorem \ref{theorem: converg_algm2} is given in Appendix \ref{Convergence of Algorithgm 2 and 4}.
\end{proof}

\subsection{Power Control with Target Rate by Exact SINR}\label{sec:target_exact_SINR}
Under IoT systems, the IoT devices will use energy harvesting and/or infrequently replaced batteries. Thus, high EE of the designed system is highly desirable to support IoT systems. Here we define the {\em UL EE} of a system as
\begin{equation}
\label{eq:EE}
E_{u} \triangleq \frac{\sum_{k = 1}^{K}R_{k}^{u}}{P_{u}\sum_{k=1}^{K}\eta_{k}},
\end{equation}
where $R_{k}^{u}$ is the UL rate for the $k$-th device, $P_{u}$ is the UL maximum transmit power per data symbol.

Since in some IoT application scenarios, the data rate requirement for each device is not relatively high, a target rate power control which can achieve a predetermined target rate for every served device while keeping high EE is desired. A basic UL target rate power control problem with random pilots under an IoT system can be formulated as shown below: 
\begin{equation}
	\label{eq:target_probm}
	\begin{aligned}
		&\text{Find} \;\eta_k, k = 1,2,...,K\\
		&\text{s.t.}\; \text{SINR}_k^u = S_t,\, k = 1, 2, ..., K,\\
		& \quad\;\; 0 \leq \eta_k \leq 1,\;  k = 1,2, ..., K,	
	\end{aligned}
\end{equation}
where $S_t$ is the predetermined target SINR value and $\text{SINR}_k^u$ can also be replaced by $\text{SINR}_k^{u, \text{MMSE}}$ or RM SINR Approximations. Note that max-min power control can be regarded as a kind of target rate power control where the target rate is the maximum uniform data rate achieved by all served devices. Thus, the design approach for the max-min power control algorithm mentioned in Section \ref{sec:max_min_exact_SINR} can be applied for the target rate power control. The difference is that for max-min power control, $\alpha$ given in step 2 of Algorithm \ref{Algorithm 1} needs to be updated in iteration so that $\alpha$ will go towards the target value where all served devices can achieve the maximum uniform data rate, while in target rate power control, $\alpha$ is a fixed value dependent on $S_t$. 

In this section, we introduce a target rate power control algorithm incorporating rate and power weighting vectors. Before going to the details of the algorithm, we first compute the per-device rate under full power condition using (\ref{eq:SINR_k_UL_MMSE}) for all devices. We regard the devices as {\em poor devices} if their per-device rates under full power are smaller than the target rate and regard the remaining devices as {\em good devices}. We set $u_k = u_p$ and $u_k = u_g$ for poor and good devices, respectively. The algorithm details are then given in {\bf Algorithm} \ref{Algorithm 3}.
\begin{algorithm}
	\caption{Target Rate Power Control - Exact SINR}
	\label{Algorithm 3}
	\begin{algorithmic}[1]
		\item Initialize $\boldsymbol{\nu}$, $\eta_{k}^{(0)}, \forall k$, and $d_{k}^{(0)}, \forall k$ as in step 1 of Algorithm \ref{Algorithm 1}. With a target SINR, $S_{t}$, compute $\alpha = S_{t}/(1 + S_{t})$. Set $n = 0$ and choose a tolerance $\epsilon > 0$.
		
		\item Update $d_{k}^{(n+1)}, \forall k$ using (\ref{eq:d_updation}) where $\rho'_{u,k'}u_{k'}, \forall k'$ in (\ref{eq:d_updation}) are substituted by $\rho'_{u,k'}, \forall k'$.
		
		\item Stop if $\max_{k}|d_{k}^{(n+1)}-d_{k}^{(n)}| \leq \epsilon, k = 1,..., K$ and set $d_{k} = d_{k}^{(n+1)}, \forall k$. Otherwise, set $n = n + 1$ and go to step 2.
		
		\item The power control coefficients are computed as $\eta_{k} = \alpha/d_{k}, \forall k$. If the constraints $0 \leq \eta_{k} \leq 1, \forall k$ are satisfied, then Algorithm \ref{Algorithm 3} ends. Otherwise, initialize $\eta_{k}^{(0)}$, and $d_{k}^{(0)}, \forall k$ as in step 1 of Algorithm \ref{Algorithm 1}. Assign the value of each element of vector $\mathbf{u}$ by $u_{g}$ or $u_{p}$ according to the per-device rate under full power case. Set $n = 0$, $\alpha = \alpha/u_{g}$, and go to step 5.
		
		\item Update $d_{k}^{(n+1)}, \forall k$ using (\ref{eq:d_updation}).
		
		\item Stop if $\max_{k}|d_{k}^{(n+1)} - d_{k}^{(n)}| \leq \epsilon, k = 1,..., K$, set $d_{k} = d_{k}^{(n+1)}, \forall k$, and compute power control coefficients as $\eta_{k} = \alpha u_{k}/d_{k}, \forall k$. Otherwise, set $n = n + 1$ and go to step 5.
	\end{algorithmic}	
\end{algorithm}
\begin{theorem}\label{theorem:converg_algm3}
	Given $\{\hat{g}_{mk}\}$, $\{\beta_{mk}\}$, $\{\gamma_{mk}\}$, $\rho_u$, $\mathbf{u}$, and $\boldsymbol{\nu}$, if a target-rate power control with target SINR, $S_t$, is feasible, the iterations in step 2 and step 5 of Algorithm \ref{Algorithm 3} converge separately to their unique solutions. The $\{\eta_{k}\}$ given by step 6 of Algorithm \ref{Algorithm 3} achieves this target-rate power control.
\end{theorem}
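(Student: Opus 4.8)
The plan is to recognize that the Step~2 and Step~5 iterations are two instances of a single fixed-point map on the auxiliary vector $\mathbf{d} = [d_1,\dots,d_K]^T$, and to prove convergence through the theory of standard interference functions. Writing the common update of (\ref{eq:d_updation}) as
\begin{equation}
\label{eq:Fmap_plan}
F_k(\mathbf{d}) = \rho'_{u,k}\hat{\mathbf{g}}_k^H\left(\alpha\sum_{k'=1}^{K}\frac{w_{k'}}{d_{k'}}\mathbf{J}_{k'} + \mathbf{I}_M\right)^{-1}\hat{\mathbf{g}}_k,
\end{equation}
Step~2 is the case $w_{k'} = \rho'_{u,k'}$ with $\alpha = S_t/(1+S_t)$, while Step~5 is the case $w_{k'} = \rho'_{u,k'}u_{k'}$ with $\alpha$ replaced by $\alpha/u_g$; in both, the scalar $\alpha$ is held fixed throughout the iteration, which is precisely the feature distinguishing this proof from that of Theorem~\ref{converg_algm1}.

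First I would verify that $F = (F_1,\dots,F_K)$ is a standard interference function, i.e.\ positive, monotone, and scalable on $\mathbf{d} > \mathbf{0}$. Positivity is immediate since each $\mathbf{J}_{k'} \succeq \mathbf{0}$ makes the bracketed matrix positive definite and $\hat{\mathbf{g}}_k \neq \mathbf{0}$. For monotonicity, increasing any $d_{k'}$ decreases $w_{k'}/d_{k'}$, hence decreases the bracketed matrix in the positive-semidefinite order, hence increases its inverse and therefore increases the quadratic form $F_k$. For scalability I would use that for $\lambda > 1$,
\begin{equation}
\label{eq:scal_plan}
\lambda\left(\alpha\sum_{k'}\tfrac{w_{k'}}{d_{k'}}\mathbf{J}_{k'} + \mathbf{I}_M\right)^{-1} = \left(\tfrac{\alpha}{\lambda}\sum_{k'}\tfrac{w_{k'}}{d_{k'}}\mathbf{J}_{k'} + \tfrac{1}{\lambda}\mathbf{I}_M\right)^{-1} \succ \left(\tfrac{\alpha}{\lambda}\sum_{k'}\tfrac{w_{k'}}{d_{k'}}\mathbf{J}_{k'} + \mathbf{I}_M\right)^{-1},
\end{equation}
the strict inequality following from $\tfrac{1}{\lambda}\mathbf{I}_M \prec \mathbf{I}_M$ and the fact that $\mathbf{A}\mapsto \mathbf{A}^{-1}$ reverses the positive-semidefinite order; sandwiching $\hat{\mathbf{g}}_k$ and multiplying by $\rho'_{u,k} > 0$ then gives $\lambda F_k(\mathbf{d}) > F_k(\lambda\mathbf{d})$.

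With the three properties in hand, I would invoke Yates' fixed-point theorem: a standard interference function has at most one fixed point, and whenever a fixed point exists the iteration $\mathbf{d}^{(n+1)} = F(\mathbf{d}^{(n)})$ converges to the component-wise minimal solution from the initialization fixed by $\eta_k^{(0)} = 1$ in Step~1. Feasibility of the target-rate problem is exactly what guarantees that a positive fixed point of (\ref{eq:Fmap_plan}) exists for the relevant pair $(\alpha,\{w_{k'}\})$, yielding the claimed convergence to a unique solution separately for Step~2 and for Step~5. To finish, I would substitute the fixed point into the power assignment: for Step~6, setting $\eta_k = \alpha u_k/d_k$ gives $w_{k'}/d_{k'} = \rho'_{u,k'}\eta_{k'}/\alpha$, and using $\mathbf{J}_{k'} = \hat{\mathbf{g}}_{k'}\hat{\mathbf{g}}_{k'}^H + \mathbf{B}_{k'} - \boldsymbol{\Gamma}_{k'}$ with (\ref{eq:def_D}) collapses the bracket to $\rho_u\sum_{k'}\eta_{k'}^{\mathrm{eff}}\hat{\mathbf{g}}_{k'}\hat{\mathbf{g}}_{k'}^H + \mathbf{D}$, where $\eta_{k'}^{\mathrm{eff}}=\nu_{k'}\eta_{k'}$. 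The fixed-point equation then reduces to $\alpha u_k = X_k$, where $X_k$ is the numerator term of (\ref{eq:SINR_k_UL_MMSE}); with $\alpha = S_t/((1+S_t)u_g)$ this equals $S_t/(1+S_t)$ for the good devices ($u_k=u_g$) and is strictly smaller for the poor devices ($u_k=u_p$), so by $\text{SINR} = X/(1-X)$ the good devices meet $\text{SINR}_k = S_t$ while the poor devices are deliberately served below target. The same substitution with $u_{k'}\equiv 1$ handles the Step~4 assignment $\eta_k=\alpha/d_k$.

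The main obstacle I anticipate is twofold. The first is making scalability rigorous, which hinges on the strict positive-semidefinite comparison in (\ref{eq:scal_plan}) and the order-reversal of matrix inversion. The second, and more delicate, is tying the abstract "existence of a fixed point" in Yates' theorem back to the problem-level feasibility hypothesis: one must argue that the minimal-power fixed point reached by Step~2 either already respects the constraint $\eta_k\le 1$ or, failing that, is made to respect it after the $u_g/u_p$ reweighting of Step~4, which is what legitimizes treating Step~2 and Step~5 as two independently convergent phases with the \emph{same} underlying interference-function machinery.
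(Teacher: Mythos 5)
Your proposal is correct and follows the same overall route as the paper: both reduce Steps 2 and 5 of Algorithm \ref{Algorithm 3} to fixed-point iterations of a map of the form (\ref{eq:SIF_d}) with \emph{constant} $\alpha$ (this being exactly what distinguishes the argument from Theorem \ref{converg_algm1}), verify that this map is a standard interference function, and invoke Yates' theorems to get uniqueness of the fixed point and convergence under feasibility. The differences are in the details, and they favor you. For monotonicity the paper differentiates $f_k(\mathbf{d})$ and checks the sign of $\partial f_k/\partial d_{k''}$, and for scalability it uses an eigendecomposition of $\bar{\mathbf{J}}$; your Loewner-order arguments (the bracketed matrix decreases in the positive-semidefinite order, hence its inverse increases; and $\lambda\mathbf{M}^{-1} = (\mathbf{M}/\lambda)^{-1}$ combined with $\tfrac{1}{\lambda}\mathbf{I}_M \prec \mathbf{I}_M$) reach the same conclusions more cleanly. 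More substantively, your closing computation — substituting the fixed point into the assignment $\eta_k = \alpha u_k/d_k$, collapsing the bracket back to $\rho_u\sum_{k'}\nu_{k'}\eta_{k'}\hat{\mathbf{g}}_{k'}\hat{\mathbf{g}}_{k'}^H + \mathbf{D}$, and checking via $\mathrm{SINR} = X/(1-X)$ that good devices hit exactly $S_t$ while poor devices land below it — verifies the second sentence of the theorem (that the $\{\eta_k\}$ of step 6 actually achieve the target-rate power control), a step the paper's proof leaves entirely implicit; the paper stops at ``$\bar{\mathbf{f}}(\mathbf{d})$ is a standard interference function, thus the theorem is proved.'' Your flagged concern about reconciling Yates-level feasibility with the per-device constraint $\eta_k \le 1$ after the $u_g/u_p$ reweighting is a real loose end, but it is one the paper does not close either.
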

\begin{proof}
	The proof of Theorem \ref{theorem:converg_algm3} is given in Appendix \ref{Convergence of Algorithm 1 and 3}.
\end{proof}

\subsection{Power Control with Target Rate by RM Approximation 1}
In this section we consider the same settings as in Section \ref{sec:target_exact_SINR}, but use RM Approximation 1. The detailed form is given in {\bf Algorithm} \ref{Algorithm 4}. 
\begin{algorithm}
	\caption{Target Rate Power Control - RM SINR}
	\label{Algorithm 4}
	\begin{algorithmic}[1]
		\item Initialize $\boldsymbol{\nu}$, $\eta_{k}^{(0)}, \forall k$, $\mathbf{D}^{(0)}$ and $\mathbf{T}^{(0)}$ as in step 1 of Algorithm \ref{Algorithm 2}. With a target SINR denoted as $S_{t}$, compute $\alpha = S_{t}M/\rho_{u}$. Set $n = 0$ and choose a tolerance $\epsilon > 0$.
		
		\item Update $\mathbf{T}^{(n+1)}$ using (\ref{eq:T_update}) where $\rho_{u}u_{k'}, \forall k'$ are substituted by $\rho_{u}$ and $\xi_{k'} = S_t,\ \forall k'$.  
		
		\item Stop if $\|\mathbf{T}^{(n+1)} - \mathbf{T}^{(n)}\|_2 < \epsilon$ and set $\mathbf{T} = \mathbf{T}^{(n+1)}$. Otherwise, set $n = n + 1$ and go to step 2.
		
		\item The power control coefficients are computed as $\eta_{k} = \alpha/\text{tr}\left(\nu_{k}\mathbf{\Gamma}_{k}\mathbf{T}\right), \forall k$. If the constraints $0 \leq \eta_{k} \leq 1, \forall k$ are satisfied, Algorithm \ref{Algorithm 4} ends. Otherwise, Initialize $\eta_{k}^{(0)}, \forall k$, $\mathbf{D}^{(0)}$ and $\mathbf{T}^{(0)}$ as in step 1 of Algorithm \ref{Algorithm 2}. Assign the value of each element of vector $\mathbf{u}$ by $u_{g}$ or $u_{p}$ according to the per-device rate under full power case. Set $n = 0, \alpha=\alpha/u_{g}$, and go to step 5.
		
		\item Update $\mathbf{T}^{(n+1)}$ using (\ref{eq:T_update}) where $\xi_{k'} = \frac{S_{t}u_{k'}}{u_g}$.
		
		\item Stop if $\|\mathbf{T}^{(n+1)} - \mathbf{T}^{n}\|_2 \leq \epsilon$, set $\mathbf{T} = \mathbf{T}^{(n+1)}$, and compute power control coefficients by $\eta_{k} = \alpha u_{k}/\text{tr}\left(\nu_{k}\mathbf{\Gamma}_{k}\mathbf{T}\right),\ \forall k$. Otherwise, set $n = n + 1$ and go to step 5.	
	\end{algorithmic}
\end{algorithm}
\begin{theorem}\label{theorem:converg_algm4}
	Given $\{\beta_{mk}\}$, $\{\gamma_{mk}\}$, $\rho_u$, $\mathbf{u}$, and $\boldsymbol{\nu}$, if a target-rate power control with target SINR, $S_t$, is feasible, the iterations given by step 2 and step 5 of Algorithm \ref{Algorithm 4} converge separately to their unique solutions. The $\{\eta_{k}\}$ given by step 6 of Algorithm \ref{Algorithm 4} achieves this target-rate power control.
\end{theorem}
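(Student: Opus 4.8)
The plan is to reuse the standard-interference-function argument behind Theorem~\ref{theorem: converg_algm2}, specialized to the case of a \emph{fixed} scaling $\alpha$ dictated by the target SINR $S_t$ instead of the adaptively updated max-min value. First I would collapse the matrix recursion (\ref{eq:T_update}) into a scalar recursion on the $K$ quantities $t_k^{(n)} \triangleq \text{tr}(\boldsymbol{\Gamma}_k \mathbf{T}^{(n)})$. Since $\mathbf{T}^{(n+1)}$ depends on $\mathbf{T}^{(n)}$ only through the traces $\{t_{k'}^{(n)}\}$, the induced map $\mathbf{t}^{(n)} \mapsto \mathbf{t}^{(n+1)}$ with components $I_k(\mathbf{t}) = \text{tr}\big(\boldsymbol{\Gamma}_k\big(\frac{\alpha}{M}\sum_{k'}\frac{\rho_u}{t_{k'}}(\mathbf{B}_{k'} - \frac{\xi_{k'}}{1+\xi_{k'}}\boldsymbol{\Gamma}_{k'}) + \frac{\mathbf{I}_M}{M}\big)^{-1}\big)$ captures the entire dynamics of Steps~2 and 5.

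The core step is to show that each $I_k$ is a standard interference function, i.e. positive, monotone, and scalable. Positivity and monotonicity hinge on the coefficient matrices $\mathbf{C}_{k'} \triangleq \mathbf{B}_{k'} - \frac{\xi_{k'}}{1+\xi_{k'}}\boldsymbol{\Gamma}_{k'}$ being positive semidefinite, which I would get from (\ref{eq:est_err_g}) (so $\beta_{mk'} \geq \gamma_{mk'}$) together with $\frac{\xi_{k'}}{1+\xi_{k'}} < 1$, giving $\mathbf{C}_{k'} \succeq \mathbf{B}_{k'} - \boldsymbol{\Gamma}_{k'} \succeq \mathbf{0}$; increasing any $t_{k'}$ then shrinks the bracketed matrix, enlarges its inverse, and raises $\text{tr}(\boldsymbol{\Gamma}_k\,\cdot\,)$. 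Writing the bracketed matrix as $\frac{1}{c}\mathbf{A} + \frac{\mathbf{I}_M}{M}$ with $\mathbf{A} \succeq \mathbf{0}$, scalability ($c\,I_k(\mathbf{t}) > I_k(c\mathbf{t})$ for $c > 1$) reduces to the operator-antitone inequality $(\frac{1}{c}\mathbf{A} + \frac{\mathbf{I}_M}{cM})^{-1} \succ (\frac{1}{c}\mathbf{A} + \frac{\mathbf{I}_M}{M})^{-1}$, which follows at once from $\frac{1}{cM}\mathbf{I}_M \prec \frac{1}{M}\mathbf{I}_M$ and is preserved under $\text{tr}(\boldsymbol{\Gamma}_k\,\cdot\,)$ because $\boldsymbol{\Gamma}_k$ is diagonal and nonnegative.

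With the three properties established, the classical fixed-point theorem for standard interference functions guarantees that, whenever the target-rate problem (\ref{eq:target_probm}) is feasible (so a fixed point exists), the recursion converges globally to a unique limit from any initialization; the same reasoning applies verbatim to Step~5, where only $\alpha$ and $\xi_{k'}$ change while the structure is intact. It then remains to tie the algorithmic fixed point to the RM approximation and to the target rate. Merging the $\boldsymbol{\Gamma}_{k'}$ and $\mathbf{D}$ terms in (\ref{eq:T_Approx1}) into $\eta_{k'}(\mathbf{B}_{k'} - \frac{e_{k'}}{1+e_{k'}}\boldsymbol{\Gamma}_{k'})$ and inserting the power-control formula $\eta_{k'} = \alpha u_{k'}/\text{tr}(\nu_{k'}\boldsymbol{\Gamma}_{k'}\mathbf{T})$ of Steps~4 and 6, the defining relation $e_{k'} = \frac{\rho'_{u,k'}\eta_{k'}}{M}\text{tr}(\boldsymbol{\Gamma}_{k'}\mathbf{T})$ forces $e_{k'} = \xi_{k'}$ at the fixed point for the chosen $\alpha = S_t M/\rho_u$, $\xi_{k'} = S_t$ (Step~2) and $\alpha = S_t M/(\rho_u u_g)$, $\xi_{k'} = S_t u_{k'}/u_g$ (Step~5). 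This self-consistency identifies the limit with the genuine $\mathbf{T}$ of the RM approximation, and inserting $\eta_k$ together with the weighted SNR $\rho'_{u,k} = \rho_u\nu_k$ into (\ref{eq:RMT_Approx1}) gives $\text{SINR}_k^{u,\text{AP1}} = S_t$ for the good devices (the poor devices being deliberately served below $S_t$ through the small weight $u_p$), which is the feasible target-rate operating point.

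I expect the scalability verification and the self-consistency check to be the main obstacles. The former must be carried out at the level of traces of $\boldsymbol{\Gamma}_k$ against the inverse of a positive definite matrix rather than for scalar SINRs as in the textbook setting, so the monotonicity of the matrix inverse has to be invoked carefully; the latter is what links the purely algorithmic recursion back to the nested fixed-point equations (\ref{eq:e_k})--(\ref{eq:e_k_t}) defining the RM approximation, and it is essential for concluding that the converged $\{\eta_k\}$ actually solve (\ref{eq:target_probm}) rather than merely a fixed-point surrogate.
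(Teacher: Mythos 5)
Your proof is correct and rests on the same engine as the paper's: Yates' standard interference function framework applied to the recursion (\ref{eq:T_update}) with $\alpha$ and $\xi_{k'}$ frozen at positive constants determined by $S_t$. The differences are in the bookkeeping and in how far you push the conclusion. The paper runs the interference-function argument on the $M$-vector $\mathbf{l}$ of diagonal entries of $\mathbf{T}$ (see (\ref{eq:SIF_RM1})) and, exploiting the fact that $\mathbf{B}_{k'}$, $\boldsymbol{\Gamma}_{k'}$, and $\mathbf{I}_M$ are all diagonal, verifies positivity, monotonicity, and scalability directly from the explicit scalar formula (\ref{eq:q_m_L}); you instead collapse the dynamics onto the $K$-vector of traces $t_k = \text{tr}(\boldsymbol{\Gamma}_k\mathbf{T})$ and verify the three properties via Loewner-order arguments (operator antitonicity of the matrix inverse). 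Both reductions are legitimate, since $\mathbf{T}^{(n+1)}$ depends on $\mathbf{T}^{(n)}$ only through these quantities; yours is lower-dimensional ($K$ versus $M$) and would survive non-diagonal correlation matrices, while the paper's is more elementary. The more substantive difference is your closing self-consistency check: by verifying $e_{k'} = \xi_{k'}$ at the fixed point you identify the algorithmic limit with the genuine RM quantity $\mathbf{T}$ of (\ref{eq:T_Approx1})--(\ref{eq:e_k_t}) and conclude $\text{SINR}_k^{u,\text{AP1}} = S_t$ for good devices (and $S_t u_k/u_g$ for down-weighted ones), i.e., you actually prove the theorem's second claim, which the paper's proof leaves implicit once convergence is established. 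One minor sharpening: your bound $\mathbf{C}_{k'} \succeq \mathbf{B}_{k'} - \boldsymbol{\Gamma}_{k'} \succeq \mathbf{0}$ yields only positive semidefiniteness; the tighter estimate $\beta_{mk'} - \frac{\xi_{k'}}{1+\xi_{k'}}\gamma_{mk'} \geq \frac{\beta_{mk'}}{1+\xi_{k'}} > 0$, hence $\mathbf{C}_{k'} \succ \mathbf{0}$, is what gives the strict monotonicity stated in the paper's property list, although Yates' theorem itself only requires the non-strict version, so your argument goes through as written.
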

\begin{proof}
	The proof of Theorem \ref{theorem:converg_algm4} is given in Appendix \ref{Convergence of Algorithgm 2 and 4}.
\end{proof}

\subsection{Algorithm Complexity Comparison}
It is noted that in Algorithms \ref{Algorithm 1} and \ref{Algorithm 3} the computation of (\ref{eq:d_updation}) involves an inverse operation of an $M \times M$ non-sparse matrix whose complexity is $\sim\mathcal{O}(M^{3})$. On the other hand, matrix $\mathbf{T}$ is a diagonal matrix, so the computation complexity of (\ref{eq:T_update}) in Algorithms \ref{Algorithm 2} and \ref{Algorithm 4} is  $\sim \mathcal{O}(MK)$. In addition, exact SINR given in (\ref{eq:SINR_k_UL_MMSE}) involves both small-scale and large-scale fading coefficients, frequent updates of the power control coefficients are required. On the contrary, RM Approximation 1 involves only large-scale fading coefficients, power control coefficients can be updated in a much slower rate.

\section{Uplink Simulation Results}\label{Uplink Simulation Results}
\subsection{Setup and Parameters for Numerical Simulations}\label{Setup and Parameters for Numerical Simulations}
We consider networks where $M$ APs and $\bar{K}$ IoT devices are uniformly distributed in a $D \times D \ m^{2}$ square area and $K$ out of $\bar{K}$ devices are active at one moment. The serving area is wrapped around to avoid boundary effects. The large-scale fading coefficients $\beta_{mk}, \forall m, \forall k$ are products of $\text{PL}_{mk}$ and $\text{SF}_{mk}$:
\begin{equation}
\label{eq:large_fading}
\beta_{mk} = \text{PL}_{mk}\text{SF}_{mk},\ \text{with}\ \text{SF}_{mk} = 10^{\frac{\sigma_{\text{sh}}z_{mk}}{10}},
\end{equation}
where $z_{mk} \sim \mathcal{N}(0,1)$. The path loss is generated as in \cite{Ngo_17_Cellfree} where a three-slope model \cite{Tang_01_Pathloss} and the Hata-Cost 231 propagation model \cite{3GPP_ETSI} are used. Shadow fading coefficients are generated based on \cite{Wang08_Joint_Shadow}. The detailed simulation setup and parameters are given in Table \ref{tbl:1}. 
\begin{table}
	\renewcommand{\arraystretch}{1.4}
	\begin{center}
		\caption{Simulation Setup and Parameters.}~\label{tbl:1}
		\fontsize{8}{8}\selectfont
		\begin{tabular}{|>{\centering\arraybackslash}m{6cm}|>{\centering\arraybackslash}m{1.5cm}|>{\centering\arraybackslash}m{0.6cm}|}\hline
			\textbf{Parameter}       &    \textbf{Value}	 \\ \hline
			$f_c$ (Carrier frequency)		 &    1.9 GHz  		     \\ \hline
			BW (Bandwidth) 			     &    20 MHz			 \\ \hline
			Noise figure             &    9 dB               \\ \hline
			$\tau_c$  & 200 \\ \hline
			$P_{u}$ (UL maximum transmit power per data symbol)  &  20 mW             \\ \hline
			$P_{p}$ (UL maximum transmit power per pilot symbol) & 20 mW             \\ \hline
			$P_{d}$ (DL maximum transmit power per AP) & 200 mW \\ \hline
		\end{tabular}
	\end{center}
\end{table}
As performance measures, exact achievable rate for the $k$-th device, $R_{k}^{u, \text{MMSE}}$, and its corresponding throughput, $U_{k}^{u, \text{MMSE}}$, are given by
\begin{equation}
\label{eq:exact_rate}
R_{k}^{u,\text{MMSE}} = \mathbb{E}\left[\log_{2}\left(1 + \text{SINR}_{k}^{u,\text{MMSE}}\right)\right],
\end{equation}
\begin{equation}\label{eq:throughput}
	U_{k}^{u, \text{MMSE}} = \text{BW}\frac{\tau_c - \tau}{2\tau_c}R_{k}^{u, \text{MMSE}},
\end{equation}
where the expectation in (\ref{eq:exact_rate}) is over small-scale fading. On the other hand, the approximated achievable rate of the $k$-th device by RM Approximation 1 is given by $R_{k}^{u,\text{AP1}} = \log_{2}\left(1 + \text{SINR}_{k}^{u,\text{AP1}}\right)$. Throughout all our simulations we assume that pilots $\boldsymbol{\psi}_k$ are generated random $\tau$-tuples with uniform distribution over the surface of a complex unit sphere. 

\subsection{Results and Discussions}
The approximation accuracy of RM Approximation 1 and 2 under full power case in terms of per-device rate is given in Fig. \ref{fig:AP_Exact_Comparison}. 
\begin{figure}
\begin{center}
	\includegraphics [width=0.48\textwidth]{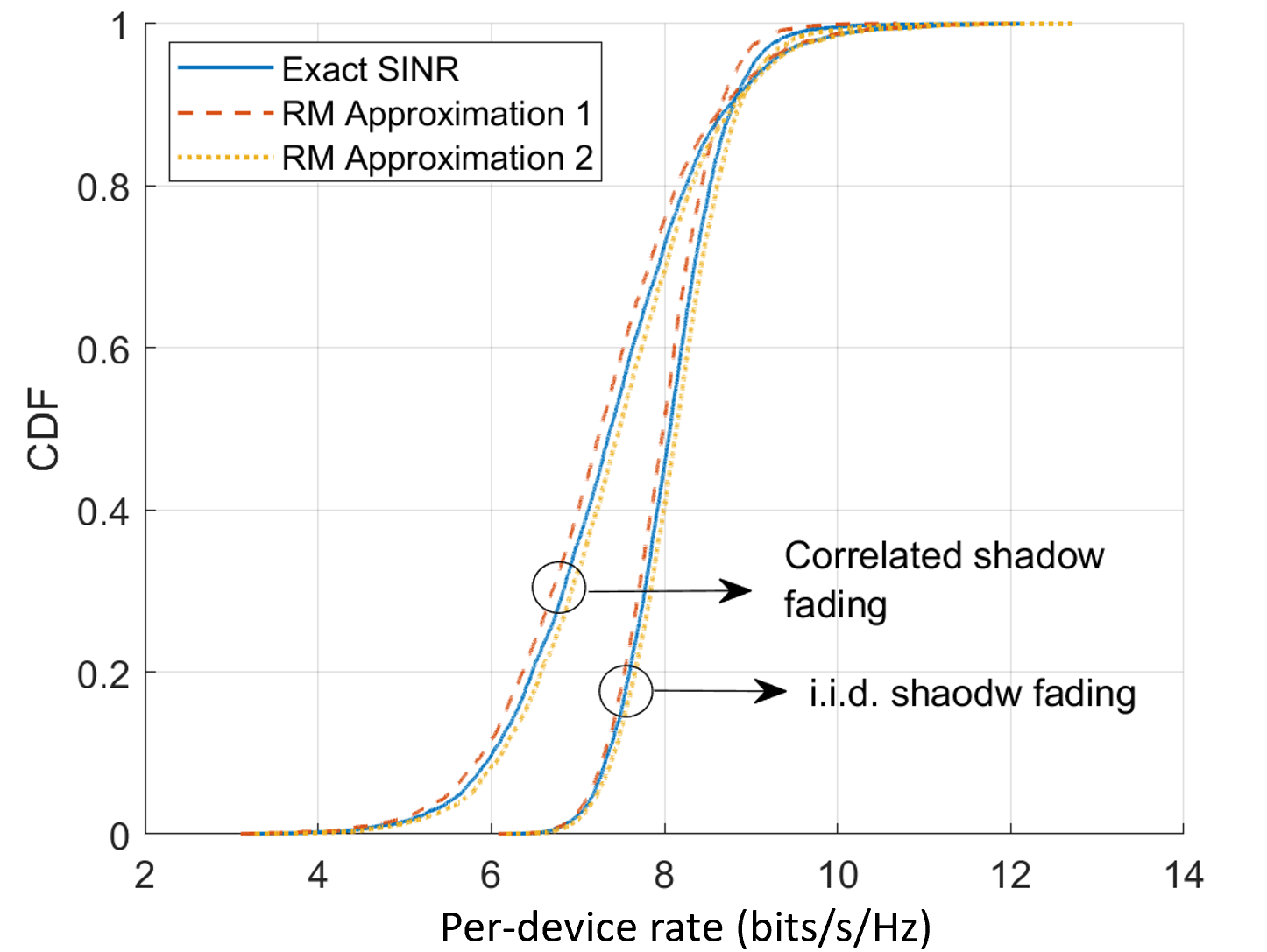}
	\caption{Per-device rate comparison between exact SINR, RM Approximation 1, and RM Approximation 2. Here, $M = 1024$, $K = 256$, $\tau = 256$, and area = 1 $\text{km}^{2}$.}\label{fig:AP_Exact_Comparison}
\end{center}
\end{figure}
For both correlated and i.i.d. shadow fading, the per-device rates obtained using RM Approximation 1 and  2 are quite close to those obtained by exact SINR. This observation verifies Theorem \ref{Theorem 1} in Appendix \ref{Theorems}.

The per-device rate performance comparison under max-min power control based on both exact SINR and RM Approximation 1 is given in Fig. \ref{fig:max_min_power_control}.
\begin{figure}
\begin{center}
\includegraphics [width=0.5\textwidth]{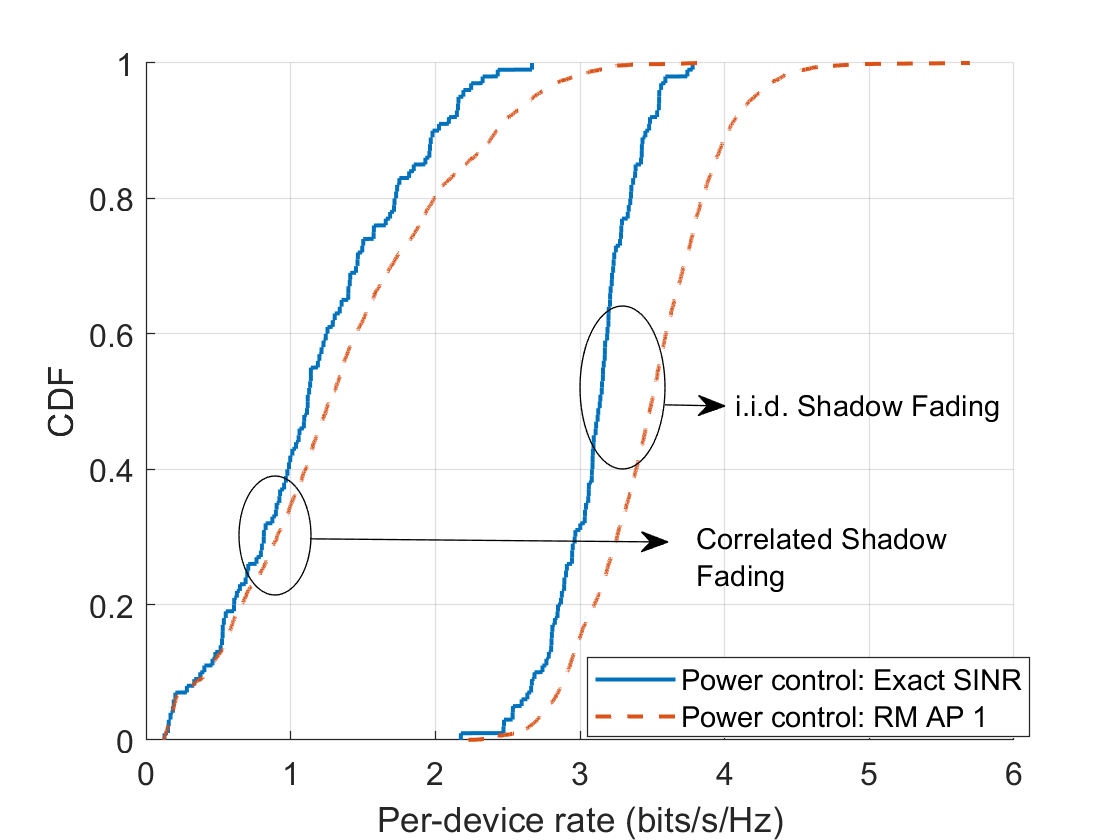} \caption{Per-device rate comparison for max-min power control algorithms based on exact SINR and RM Approximation 1. Here, $M = 160$, $K = 40$, $\tau = 40$, and area = 1 $\text{km}^{2}$.}\label{fig:max_min_power_control}
\end{center}
\end{figure}  
Note that the RM AP 1 curves are obtained as follows. The max-min power control coefficients obtained by Algorithm \ref{Algorithm 2} are substituted into (\ref{eq:exact_rate}) to compute the per-device rates of RM AP 1. Under such a computation, we observe that the per-device rates achieved based on RM AP1 are equivalent to or even better than the performance obtained based on exact SINR. One explanation for this phenomenon is that in order to obtain a uniform service to every device, this uniform rate achieved by max-min power control is limited. Under some realizations of small-scale fading, this rate can be small enough to reduce the expected per-device rate calculated by (\ref{eq:exact_rate}). On the other hand, the power control coefficients used to obtain the curves of RM AP 1 are based on large-scale fading, and as shown in Fig. \ref{fig:max_min_power_control}, better performance can be achieved.

Fig \ref{fig:Opt_subopt_max_min_comparison} shows the per-device throughput comparison of CF IoT systems with different settings. We  consider the cases of i.i.d. and correlated fading cases. 
\begin{figure}
\begin{center}
	\includegraphics
	[width=0.51\textwidth]{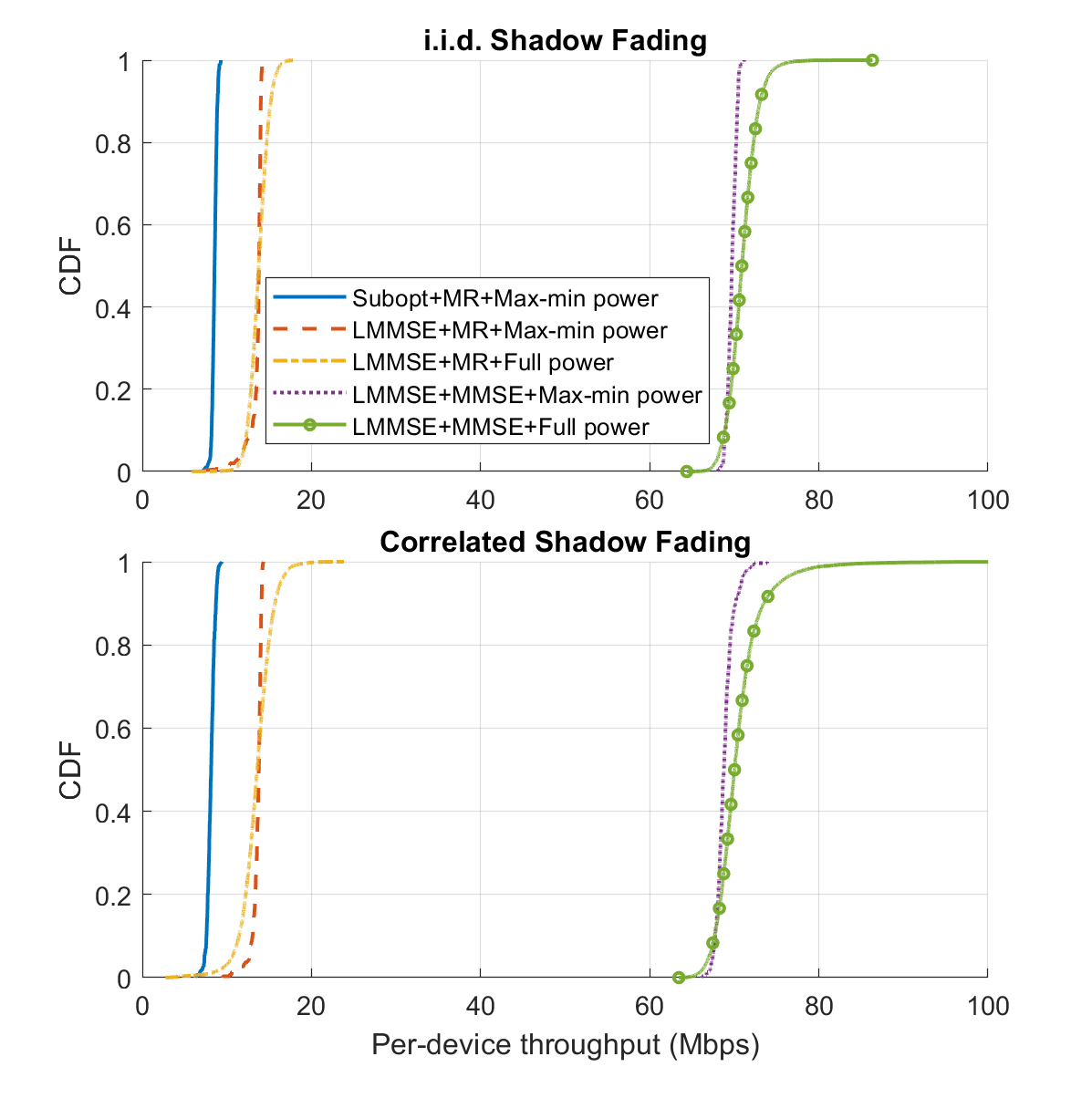} \caption{Performance comparison between optimal and sub-optimal CF IoT systems. Here, $M = 128$, $K = 40$, $\tau = 60$, and area = 0.01 $\text{km}^{2}$.}\label{fig:Opt_subopt_max_min_comparison}
\end{center}
\end{figure} 
In Fig. \ref{fig:Opt_subopt_max_min_comparison}, `Subopt' denotes sub-optimal channel estimation applied in \cite{Ngo_17_Cellfree} and MR is short for maximum-ratio MIMO receiver. It is observed that around 7 times performance improvement is achieved by our system with optimal channel estimation and MMSE MIMO receiver compared with systems with sub-optimal channel estimation and/or MR MIMO receiver. We see, however, that power control does not lead to a significant increase in data rates for both of these cases. However, as it is shown below, power control does lead to a large gain in terms of EE, which is crucial for IoT systems.

The energy efficiencies of the full power transmission case, Algorithm \ref{Algorithm 3} and Algorithm \ref{Algorithm 4} with different target rates are shown in Fig. \ref{fig:EE_Comparison}.
\begin{figure}
\begin{center}
	\includegraphics
	[width=0.52\textwidth]{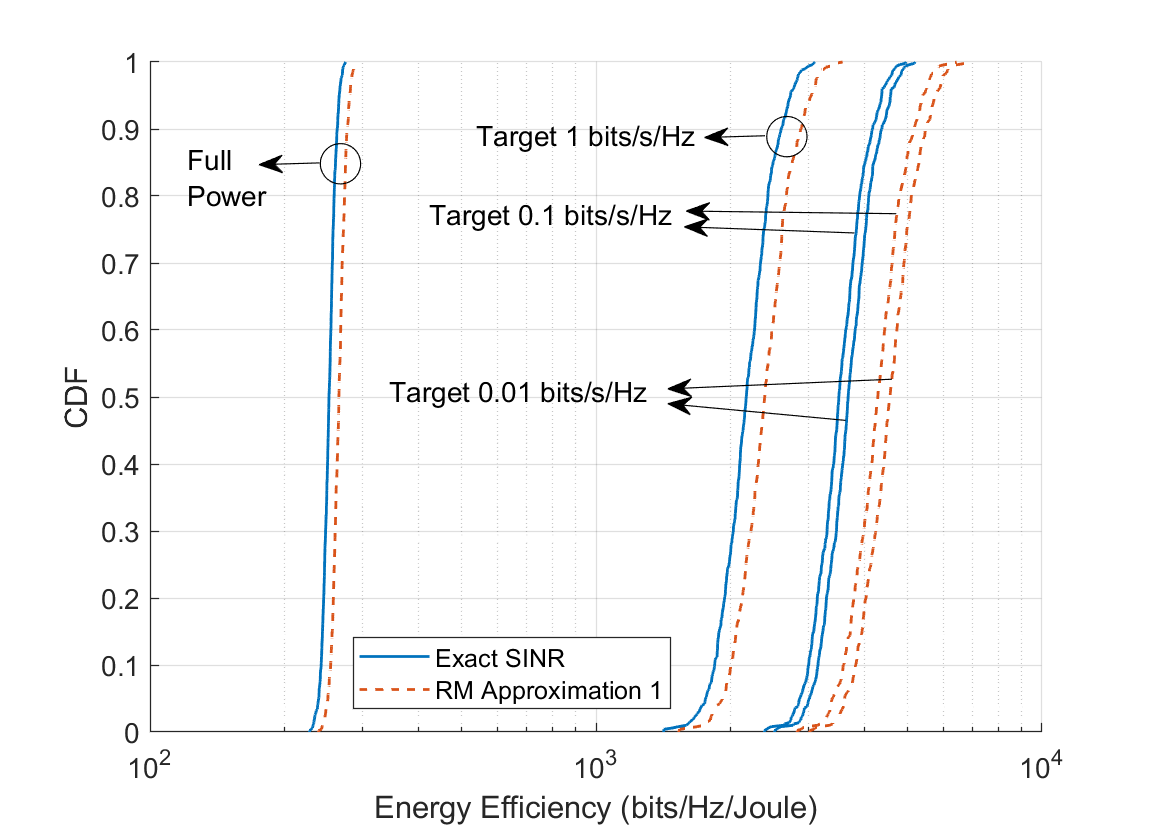} \caption{Energy efficiency comparison among full power case and different target per-device rates under i.i.d. shadow fading. Here, $M = 160$, $K = 40$, $\tau = 40$, and area = 1 $\text{km}^{2}$.}
	\label{fig:EE_Comparison}
\end{center}
\end{figure}
We see that the power control gives large gain over the full power transmission. In particular, for target rates of 0.01 and 0.1 bits/s/Hz we obtain 17-fold and 9-fold improvements, respectively. 
It is also observed that higher EE is obtained by Algorithm \ref{Algorithm 4} (i.e., based on RM Approximation 1) than Algorithm \ref{Algorithm 3} (i.e., based on exact SINR).

\section{Downlink Transmission}\label{Downlink Transmission}
In this section, we consider DL transmission under optimal channel estimation and CB precoding for CF mMIMO IoT systems. 

\subsection{Downlink Data Transmission}
Define $\eta_{mk}, m = 1,2,...,M, k = 1,2,..., K$ as the {\em DL power coefficient} for the data symbol transmitted by the $m$-th AP for the $k$-th device. The transmitted signal from the $m$-th AP is given by
\begin{equation}
	\label{eq:x_m_trans}
	x_m = \sqrt{\rho_{d}}\sum_{k=1}^{K}\sqrt{\eta_{mk}}\hat{g}_{mk}^*s_k.
\end{equation} 
where $s_k$ is the symbol intended for the $k$-th active device and satisfies $\mathbb{E}\{|s_k|^2\} = 1$. Then the received signal at the $k$-th device under CB precoding is given by
\begin{equation}
\label{eq:y_d_k}
	\begin{aligned}
	y_{k}^{d} = 
	&\sqrt{\rho_{d}}\sum_{m=1}^{M}\sqrt{\eta_{mk}}\mathbb{E}[\hat{g}_{mk}^{*}g_{mk}]s_{k} \\ 
	&+\sqrt{\rho_{d}}\sum_{m=1}^{M}\sqrt{\eta_{mk}}\left(\hat{g}_{mk}^{*}g_{mk} - \mathbb{E}[\hat{g}_{mk}^{*}g_{mk}]\right)s_{k} \\
	& + \sqrt{\rho_{d}}\sum_{k'\neq k}\sum_{m=1}^{M}\sqrt{\eta_{mk'}}\hat{g}_{mk'}^{*}g_{mk}s_{k'} + w_{k}^{d}.
	\end{aligned}.
\end{equation}
Based on (\ref{eq:y_d_k}), a closed-form expression for the DL SINR is derived by \cite{Rao_Internet_2019, Rao_cellfree_2018} using the technique in \cite{marzetta2016fundamentals}. This closed-form SINR expression with optimal channel estimation and CB precoding is given in (\ref{eq:SINR_d_k_radom}). 
\begin{figure*}
	\begin{equation}
	\label{eq:SINR_d_k_radom}
	\begin{aligned}
	\text{SINR}_{k}^{\text{IoT}} = \frac{\rho_{d}\left(\sum_{m=1}^{M}\sqrt{\eta_{mk}}\gamma_{mk}\right)^{2}}{\splitfrac{1 + \rho_{d}\sum_{m=1}^{M}\eta_{mk}\gamma_{mk}\beta_{mk} + \rho_{d}\sum_{k'\neq k}\Bigg(\sum_{m=1}^{M}\eta_{mk'}\beta_{mk}||\mathbf{a}_{mk'}||_{2}^{2} + } {\tau\rho_{p}\left(\left|\sum_{m=1}^{M}\sqrt{\eta_{mk'}}\beta_{mk}\boldsymbol{\psi}_{k}^{H}\mathbf{a}_{mk'}\right|^{2} + \sum_{m=1}^{M}\eta_{mk'}\sum_{j=1}^{K}\beta_{mk}\beta_{mj}\left|\boldsymbol{\psi}_{j}^{H}\mathbf{a}_{mk'}\right|^{2}\right)\Bigg)}}
	\end{aligned}.
	\end{equation}
\hrulefill
\end{figure*}
When the devices are assigned orthonormal pilots, it can be verified that $\text{SINR}_{k}^{\text{IoT}}$ is converted to $\text{SINR}_{k}^{\text{orth}}$,
\begin{equation}
\label{eq:SINR_d_k_orth}
\text{SINR}_{k}^{\text{orth}} = \frac{\rho_{d}\left(\sum_{m=1}^{M}\sqrt{\eta_{mk}\gamma_{mk}}\right)^{2}}{1 + \rho_{d}\sum_{k'=1}^{K}\sum_{m=1}^{M}\eta_{mk'}\gamma_{mk'}\beta_{mk}},	
\end{equation}
which coincides with the SINR derived in \cite{Ngo_17_Cellfree}.

\section{Downlink Power Control}\label{Downlink Power Control}
In the DL transmission, we consider the following per-AP power constraint:
\begin{equation}
	\label{eq:pow_constr_DL}
	\mathbb{E}\{|x_m|^2\} \leq \rho_{d},\, m = 1,2,...,M.
\end{equation}
With the model defined in (\ref{eq:ch_coeffi}), (\ref{eq:pow_constr_DL}) can be rewritten as
\begin{equation}
	\label{eq:pow_constr_DL_2}
	\sum_{k=1}^{K}\eta_{mk}\gamma_{mk} \leq 1, m = 1,2,...,M.
\end{equation}

\subsection{Optimal Power Control}
According to the power constraint given in (\ref{eq:pow_constr_DL_2}), DL max-min power control based on random pilots under an IoT system can be formulated as the problem shown below:
\begin{equation}
\begin{aligned}
	\label{eq:max_min_dl_IoT}
	&\max_{\boldsymbol{\eta}}\min_{k = 1,..., K}\text{SINR}_{k}^{\text{IoT}}\\
	&\text{s.t.} \sum_{k' = 1}^{K}\eta_{mk'}\gamma_{mk'} \leq 1,\ m = 1,2,..., M\\
	&\;\;\;\;\;\eta_{mk'} \geq 0,\ m = 1,2,..., M, k' = 1,2,..., K.
\end{aligned}
\end{equation}
The optimization problem (\ref{eq:max_min_dl_IoT}) is quasi-concave \cite{Ngo_17_Cellfree}. It can be solved by performing a bisection search and solving a convex feasibility problem in each step \cite{boyd2004convex}. However, as the number of APs and IoT devices increases, the bisection search becomes too complex, and a more simple power control algorithm is required. For orthonormal pilots, the max-min power control is also a quasi-concave problem and is shown below:
\begin{equation}
\begin{aligned}
\label{eq:max_min_dl_orth}
&\max_{\boldsymbol{\eta}}\min_{k}\text{SINR}_{k}^{\text{orth}}(\boldsymbol{\eta}) = \frac{\rho_{d}\left(\sum_{m=1}^{M}\sqrt{\eta_{mk}}\gamma_{mk}\right)^{2}}{1 + \rho_{d}\sum_{m = 1}^{M}\beta_{mk}\sum_{k'=1}^{K}\eta_{mk'}\gamma_{mk'}}\\
&\qquad\qquad\text{s.t.} \sum_{k' = 1}^{K}\eta_{mk'}\gamma_{mk'} \leq 1, m = 1,..., M\\
& \qquad\qquad\quad\;\eta_{mk'} \geq 0, m = 1,..., M, k' = 1,..., K.	
\end{aligned}	
\end{equation}

\subsection{Power Control using Neural Network}
As we noticed above, the complexity of finding the optimal solution of (\ref{eq:max_min_dl_IoT}) or (\ref{eq:max_min_dl_orth}) is too high for any practical applications. In this Section we suggest to use Neural Networks for finding low complexity suboptimal power control. 

We first let $p_m= \sum_{k=1}^K \eta_{mk} \gamma_{mk}$ be the normalized transmit power of the $m$-th AP and let $p_m^{\text{opt}}$ be the optimal value of $p_m$ with respect to the optimization problem (\ref{eq:max_min_dl_orth}). It is noticeable that if we can find $p_{m}^{\text{opt}},\ \forall m$, the problem in (\ref{eq:max_min_dl_orth}) becomes equivalent to
\begin{equation}
	\begin{aligned}
		\label{eq:max_min_dl_orth_convex}
		&\max_{\boldsymbol{\eta}}\min_{k}\text{SINR}_{k}^{\text{orth}}(\boldsymbol{\eta}) = \frac{\rho_{d}\left(\sum_{m=1}^{M}\sqrt{\eta_{mk}}\gamma_{mk}\right)^{2}}{1 + \rho_{d}\sum_{m=1}^{M}p_{m}^{\text{opt}}\beta_{mk}}\\
		&\qquad\qquad\text{s.t.} \sum_{k'=1}^{K} \eta_{mk'}\gamma_{mk'} = p_{m}^{\text{opt}}\\
		& \qquad\qquad\quad\;\eta_{mk'} \geq 0, m = 1,..., M, k' = 1,..., K,
	\end{aligned}
\end{equation}
which is a convex problem \cite{Nayebi2017Precoding} and has significant smaller complexity compared with the quasi-concave problem (\ref{eq:max_min_dl_orth}).
 
To convert problem (\ref{eq:max_min_dl_orth}) to a convex problem, $p_{m}^{\text{opt}}, \forall m$, need to be found. It is observed in \cite{Nayebi2017Precoding} that an exponential relation often approximately holds between $\beta_{m}^{\text{max}}$ and $p_{m}^{\text{opt}}$ where $\beta_{m}^{\text{max}}$ is defined as the largest large-scale fading coefficient between the $m$-th AP and its serving devices, i.e., $\beta_{m}^{\text{max}} = \max_{k=1,.., K}\beta_{mk}$. An exponential regression can then be implemented to predict $p_{m}^{\text{opt}}, \forall m$. We denote the outputs of the exponential regression as $p_{m}(\beta_{m}^{\text{max}}), \forall m$, and these outputs can be used in (\ref{eq:max_min_dl_orth_convex}). In addition, we find that as the length of random pilots goes to infinity and assume that random pilots $\boldsymbol{\psi}_k$ are used in IoT systems, the value of $\text{SINR}_{k}^{\text{IoT}}$ will approach to the value of  $\text{SINR}_{k}^{\text{orth}}$, i.e.,
\begin{equation}
	\label{eq:IoT_to_orth}
	\lim_{\tau\rightarrow \infty} \text{SINR}_{k}^{\text{IoT}} \longrightarrow \text{SINR}_{k}^{\text{orth}}.
\end{equation}
It is also noted that even with finite, but reasonably large $\tau$, $\text{SINR}_{k}^{\text{IoT}} \approx \text{SINR}_{k}^{\text{orth}}$. Based on these observations, low complexity power control for DL IoT systems with random pilots can be implemented as follows. First, $p_{m}^{\text{opt}},\, \forall m$, are approximated using exponential regression based on $\beta_{m}^{\text{max}},\, \forall m$, as in \cite{Nayebi2017Precoding}. The outputs, $p_{m}(\beta_{m}^{\text{max}}),\, \forall m$, are then substituted into (\ref{eq:max_min_dl_orth_convex}), and solving the obtained convex optimization problem, we find the power coefficients $\eta_{mk}$.

However, the performance achieved using $p_{m}(\beta_{m}^{\text{max}})$ is not close enough to the optimal performance achieved by $p_{m}^{\text{opt}}$. What is even more important is that the generality of this method is limited. In particular, if we found a function $p_m(\beta_m^{\text{max}})$ that matches well $p_m^{\text{opt}}$ for one network, typically this function is not accurate for another network. Moreover, for some networks, no exponential relationship can be found between $p_{m}^{\text{opt}}$ and $\beta_{m}^{\text{max}}$, e.g., this is the case for high density networks. 

To overcome the above problems, we propose to use a simple fully connected NN to approximate $p_{m}^{\text{opt}}, \forall m$. As shown in Fig.~\ref{fig:NN}, the structure of the NN we used includes three hidden fully connected layers and each layer has four neurons.
\begin{figure*}
\begin{center}
	\includegraphics [width=0.75\textwidth]{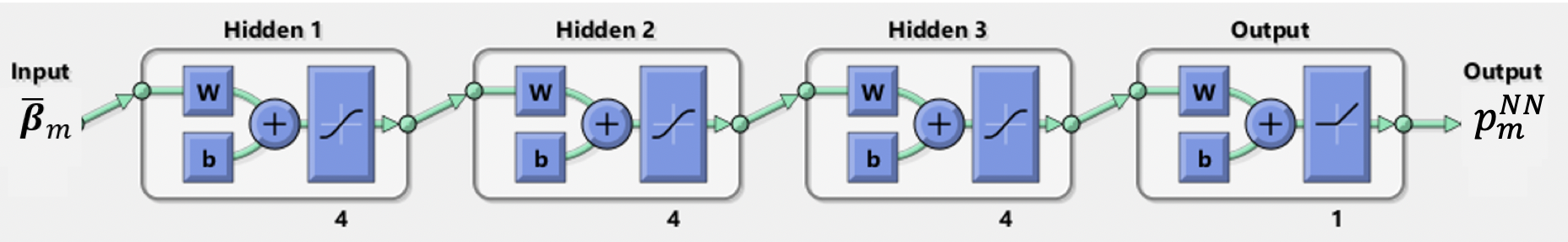}
	\caption{Neural network for predicting	 $p_{m}^{\text{NN}}, \forall m$.}\label{fig:NN}
\end{center}
\hrulefill
\end{figure*}
Tangent-sigmoid activation function is used for the three hidden layers and rectified linear unit (ReLU) activation function is used in the output layer. The input vector of the NN $\bar{\boldsymbol{\beta}}_m = [\bar{\beta}_{m1}, \bar{\beta}_{m2},...,\bar{\beta}_{m\widehat{K}}]^T \in \mathbb{R}^{\widehat{K} \times 1}$ consists of $\widehat{K}$ largest large-scale fading coefficients from the $m$-th AP to its nearby serving devices. For example, $\bar{\beta}_{m1} = \max_{k=1,.., K}\beta_{mk}$, $\bar{\beta}_{m2}$ is the second largest large-scale fading coefficient and so on. By $p_m^\text{NN}$ we denote the approximation of $p_m^{\text{opt}}$ predicted by the NN for the $m$-th AP.

\subsection{Scalable Power Control with High Energy Efficiency}\label{sec:Scalable_pow_control}
In real-life applications, we expect that the areas covered by CF networks will be most likely very large. Such large areas will contain so large number of APs and IoT devices that even the computation complexity of solving the convex problem (\ref{eq:max_min_dl_orth_convex}) would be too high. Thus a scalable power control algorithm whose complexity grows linearly with $M$ and $K$ is required.

Here we propose a sub-optimal scalable power control algorithm that achieves high EE at the same time. We first define the {\em density of a network} as
\begin{equation}
\label{eq:density_network}
\text{Density} = \frac{\text{Number of APs}}{\text{Serving Area}}.
\end{equation}
We also assume that the ratio $M/K$ is fixed. As mentioned above, for very large networks even the convex problem (\ref{eq:max_min_dl_orth_convex}) becomes too heavy. For such networks we replace the max-min power optimization with the uniform power control, in which the power coefficients are given by
\begin{equation}
	\label{eq:uniform_power_control}
	\eta_{m} = \frac{p_{m}^{\text{NN}}}{\sum_{k=1}^{K}\gamma_{mk}},\ \eta_{m} = \eta_{mk},\, \forall k,
\end{equation}
where $p_m^\text{NN}$ is also obtained using the NN structure given in Fig.~\ref{fig:NN}. It is important to note that the computation complexity of predicting every $p_m^{\text{NN}}$ by the NN is not only very low due to the simple structure of the NN, but also keeps constant as $M$ and $K$ increase. 
This means that we obtain a scalable network since the amount of computations conducted by each AP does not depend on the number of APs in the network. At the same time, it will be shown later that (\ref{eq:uniform_power_control}) provides much higher EE compared with the full power case.

\subsection{Neural Network Training}
In the process of training, $\widehat{K}$ is set as four and Levenberg-Marquardt algorithm \cite{levenberg1944method}, \cite{marquardt1963algorithm} is used to train the NN. We use $10^4$ training samples $p_m^\text{opt}$ obtained by finding the optimal solutions of (\ref{eq:max_min_dl_orth}). After the training we find the NN weights $\boldsymbol{\theta}_{j}^{n_j}, \forall n_j, \forall j$ with $\boldsymbol{\theta}_{j}^{n_j}$ being the weight vector of the $n_j$-th neuron in $j$-th layer and these weights are applied for online prediction of $p_m^\text{NN}$. This approach is used for producing results presented in Fig.~\ref{fig:NN_Any_Area}.

For training the NN for scalable power control, the same training parameters and algorithm mentioned above are adopted. The idea is to train NN for small areas, which have a relatively small number of APs and therefore afford solving (\ref{eq:max_min_dl_orth}) and finding $p_m^{\text{opt}}$ needed for training. The area is wrapped around, see details in \cite{Ngo_17_Cellfree}, in order to mimic an infinite size network. Next, we use {\em the same NN} for large areas that have the same density and are not wrapped around. This means that we use the same weights $\boldsymbol{\theta}_{j}^{n_j}, \forall n_j, \forall j$, and each AP, say AP $m$, uses input vector $\bar{\boldsymbol{\beta}}_m = [\bar{\beta}_{m1}, \bar{\beta}_{m2},...,\bar{\beta}_{m\widehat{K}}]^T$ composed by $\widehat{K}$ largest coefficients $\beta_{mk}$.

Note that the NN is trained offline before online application. Thus, the online complexity of the trained NN only comes from the inference stage (i.e., the prediction of $p_m^{\text{opt}}$) which is very low due to the simple NN structure applied.

\subsection{Simulation Results and Discussions}  
The same simulation setup and parameters given in Section \ref{Setup and Parameters for Numerical Simulations} are used here. 

{\bf Experiment 1: } In this experiment, we would like to train a NN so that it can be applied to any given serving area with a fixed number of APs and devices. In our experiment we used $M=128$ and $K=4$. Since the number of APs and devices is relatively small, for given powers $p_m^\text{opt}$, we can find power coefficients by solving the problem (\ref{eq:max_min_dl_orth_convex}). We train our NN for several areas and next we use the obtained NN for an area that was not used for training. The results of this approach are shown in Fig. \ref{fig:NN_Any_Area}.
\begin{figure}
	\begin{center}
		\includegraphics [width=0.5\textwidth]{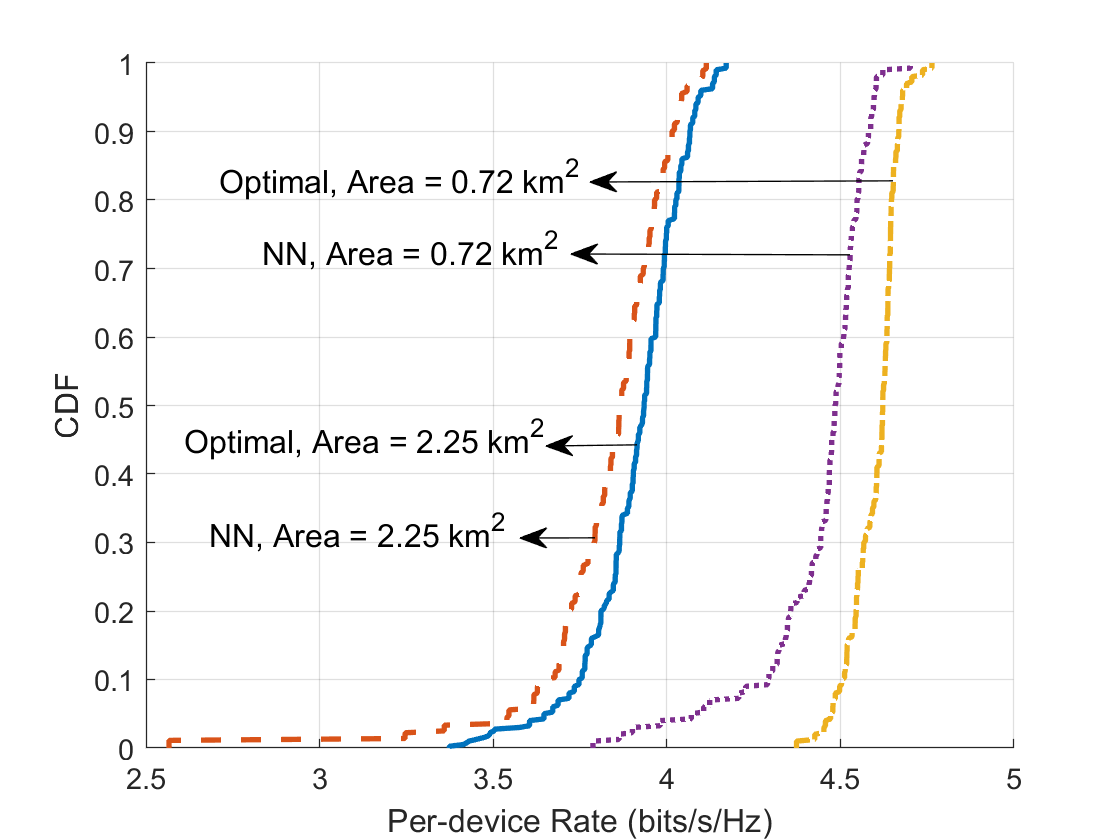}
		\caption{Performance comparison between NN based and 	optimal power control algorithms of DL CF IoT systems. Here, $M = 128$, $K = 4$.}\label{fig:NN_Any_Area}
	\end{center}
\end{figure} 
The NN used in Fig. \ref{fig:NN_Any_Area} is trained by squares with areas: $0.016, 0.063, 0.25, 0.56, 1, 4\ \text{km}^2$ and then the obtained NN is used for squares of sizes $0.72$ and $2.25\ \text{km}^2$. We see that this NN provides the performance which is close to the performance obtained with the optimal power control.  

{\bf Experiment 2:} In this experiment, we fixed the density of the serving area as $\frac{M = 64}{0.03\,km^2}$ and train the NN with $M = 64$ and $K = 16$. Then we use the trained NN for finding $p_m^\text{NN}$ for networks covering large areas with the same density and the same ratio $M/K$, see Fig.~\ref{fig:NN_fix_density_large}.

First, we compare our NN approach with other approaches for a small area and a small $M$ and $K$, in Fig.~\ref{fig:NN_fix_density}. We use this small area, $M$, and $K$ since this allows us to find the solution of (\ref{eq:max_min_dl_orth}) and (\ref{eq:uniform_power_control}), which are necessary for the comparisons in Fig.~\ref{fig:NN_fix_density}.
\begin{figure}
	\begin{center}
		\includegraphics [width = 0.5\textwidth]{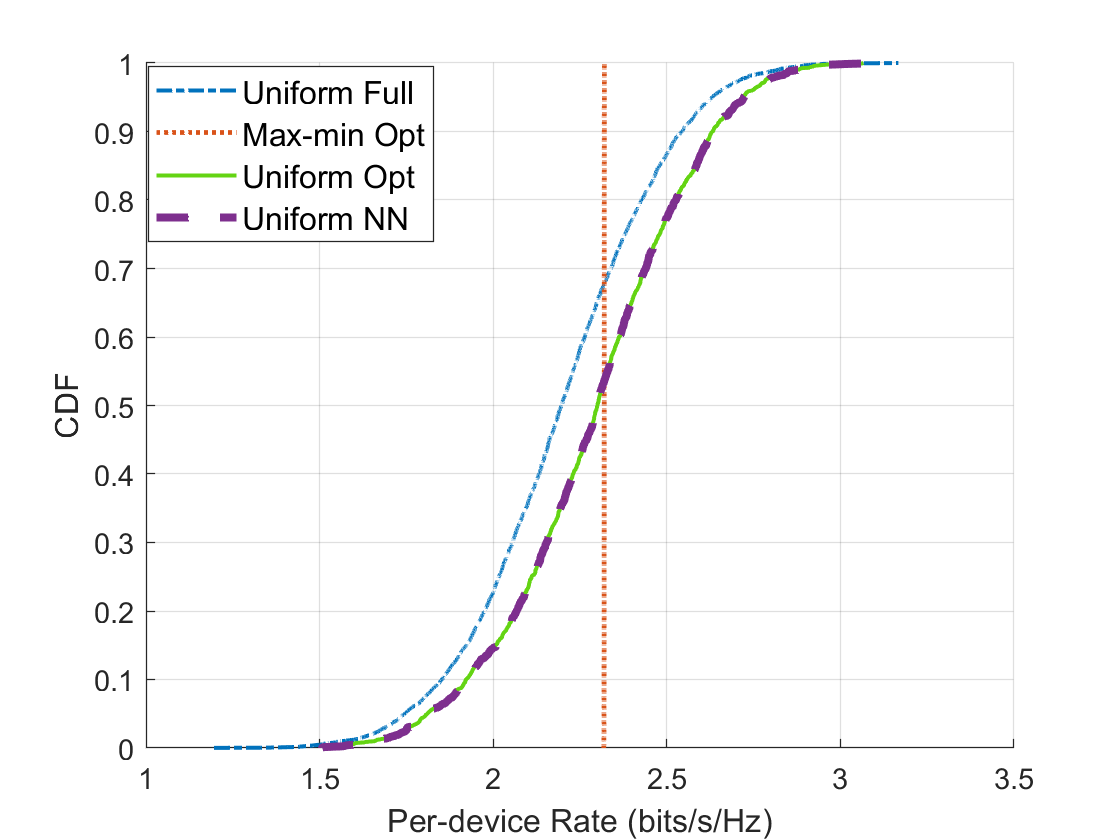}
		\caption{Spectral efficiency comparison among full power, uniform power control, and optimal power control cases for DL CF IoT systems. Here, $M = 64$, $K = 16$, $\text{Area} = 0.03\ \text{km}^{2}$.}\label{fig:NN_fix_density}
	\end{center}
\end{figure}
\begin{figure}
	\begin{center}
		\includegraphics [width=0.5\textwidth]{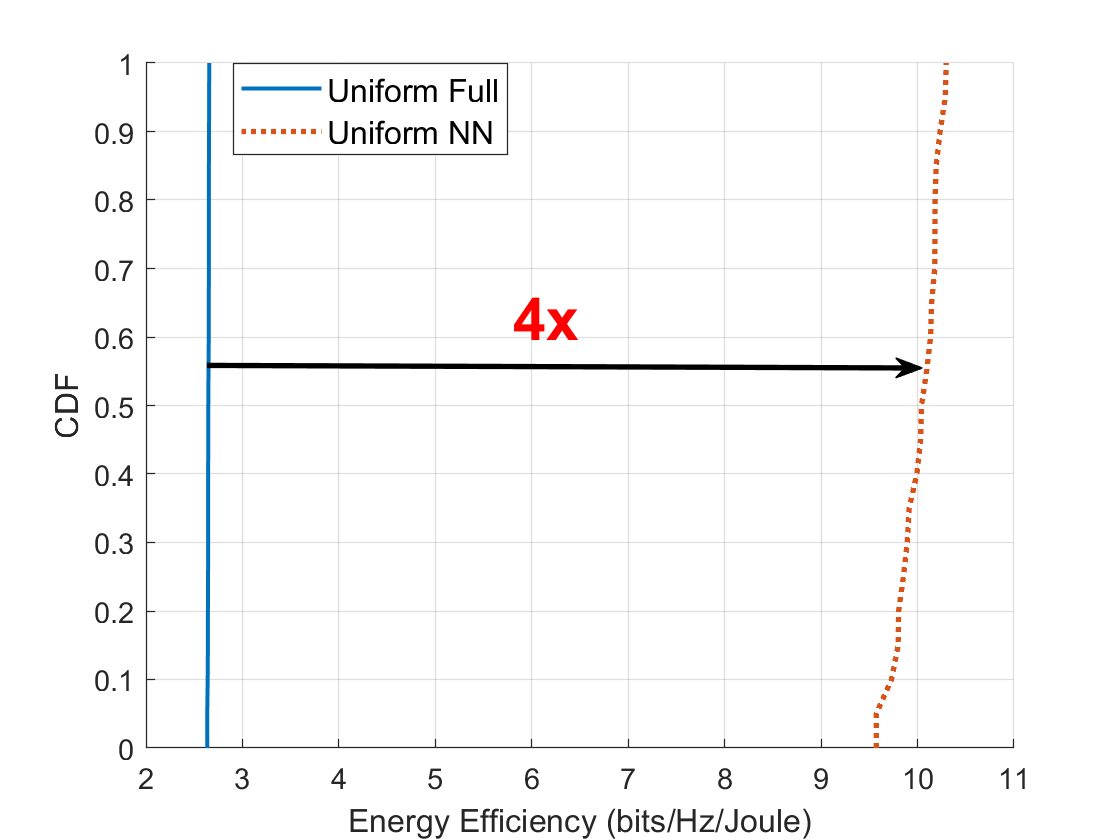}
		\caption{Energy efficiency comparison between full power and uniform power control for DL CF IoT systems. Here, $M = 4096$, $K = 1024$, $\text{Area} = 2\ \text{km}^{2}$.}\label{fig:NN_fix_density_large}
	\end{center}
\end{figure}
Note that the vertical appearance of ``Max-min Opt" is due to equalization of the data rates of all $K$ devices.
We present rates when we use (\ref{eq:uniform_power_control}) with maximal $p_m$ (Uniform Full), optimal $p_m^{\text{opt}}$ (Uniform Opt), and NN produced $p_m^{\text{NN}}$ (Uniform NN). It is observed from Fig.~\ref{fig:NN_fix_density} that under uniform power control, the SE performance is almost the same using either $p_{m}^{\text{NN}}, \forall m$, or $p_{m}^{\text{opt}}, \forall m$, which suggests accurate predictions of the proposed NN. Although the performance gap between uniform power control and full power case is not big under current relatively high area density case, we found that the gap would increase in low area density case. More importantly, the NN power control allows us to significantly improve EE compared with the full power transmission as it is shown in Fig.~\ref{fig:NN_fix_density_large}. Note that EE is crucially important for IoT networks.

Similar to UL transmission, we define DL EE as $$E_d \triangleq \frac{\sum_{k=1}^{K}R_{k}^{d}}{\sum_{m=1}^{M}P_m}$$ where $P_m$ can be the maximum transmit power of the $m$-th AP (i.e, $P_d$) or $p_m^{\text{NN}}P_d$. Fig.~\ref{fig:NN_fix_density_large} shows the EE comparison between uniform power control and full power case for a network with the same density $\frac{M = 64}{0.03\,km^2}$ and $M=4096$ and $K=1024$. We observe that our NN power control leads to 4-fold improvement of EE.

\section{Conclusion}\label{Conclusion}
In this work, we proposed IoT systems supported by CF mMIMO with optimal components - LMMSE channel estimation and MMSE MIMO receiver. We derived two random matrix approximations for device's UL SINR and used one of them for efficient and low complexity power control algorithms that give large EE gains, which is very important for low powered IoT devices. 
For DL transmission a NN aided max-min power control algorithm is proposed. Comparing with the optimal max-min power control algorithm, it has significantly lower complexity and achieves comparable performance.
We further proposed a scalable NN algorithm for transmit power control. This algorithm, though sub-optimal, 
incorporates a simple fully connected NN and obtains power coefficients with very low complexity. The scalability of this algorithm is also important for IoT systems since it works for a very large service area (i.e., covering a large number of IoT devices). Multifold gain in EE is obtained by this algorithm compared with the full power transmission approach.

\appendix
\subsection{Some Useful Lemmas}
\begin{lemma} \label{Lemma 1}[Matrix Inversion Lemma] \cite[(2.2)]{silverstein1995empirical}
	Let $\mathbf{U}$ be an $M \times M$ invertible matrix and $\mathbf{x} \in \mathbb{C}^{M\times 1}$, $c\in \mathbb{C}$ for which $\mathbf{U} + c\mathbf{xx}^{H}$ is invertible. Then
	\begin{equation}
	\mathbf{x}^{H}\left(\mathbf{U} + c\mathbf{xx}^{H}\right)^{-1} = \frac{\mathbf{x}^{H}\mathbf{U}^{-1}}{1 + c\mathbf{x}^{H}\mathbf{U}^{-1}\mathbf{x}}.
	\end{equation}
\end{lemma}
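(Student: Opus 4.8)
The plan is to prove this rank-one identity by direct verification, exploiting the crucial observation that $\mathbf{x}^{H}\mathbf{U}^{-1}\mathbf{x}$ is a \emph{scalar}. Since $\mathbf{U} + c\mathbf{x}\mathbf{x}^{H}$ is assumed invertible, right-multiplication by it is injective, so the claimed equation is equivalent to the statement obtained by multiplying both sides on the right by $\mathbf{U} + c\mathbf{x}\mathbf{x}^{H}$. Using $\mathbf{x}^{H}(\mathbf{U} + c\mathbf{x}\mathbf{x}^{H})^{-1}(\mathbf{U} + c\mathbf{x}\mathbf{x}^{H}) = \mathbf{x}^{H}$ on the left-hand side, it therefore suffices to establish
\begin{equation}
\frac{\mathbf{x}^{H}\mathbf{U}^{-1}}{1 + c\mathbf{x}^{H}\mathbf{U}^{-1}\mathbf{x}}\left(\mathbf{U} + c\mathbf{x}\mathbf{x}^{H}\right) = \mathbf{x}^{H}.
\end{equation}

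Before carrying out this verification I would check that the scalar denominator $1 + c\mathbf{x}^{H}\mathbf{U}^{-1}\mathbf{x}$ is nonzero, so that the right-hand side of the lemma is well defined. This follows from the matrix determinant lemma $\det(\mathbf{U} + c\mathbf{x}\mathbf{x}^{H}) = \det(\mathbf{U})\,(1 + c\mathbf{x}^{H}\mathbf{U}^{-1}\mathbf{x})$: since both $\mathbf{U}$ and $\mathbf{U} + c\mathbf{x}\mathbf{x}^{H}$ are invertible by hypothesis, each determinant is nonzero, which forces $1 + c\mathbf{x}^{H}\mathbf{U}^{-1}\mathbf{x} \neq 0$.

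For the verification itself I would expand the numerator of the left-hand side as $\mathbf{x}^{H}\mathbf{U}^{-1}(\mathbf{U} + c\mathbf{x}\mathbf{x}^{H}) = \mathbf{x}^{H} + c\,(\mathbf{x}^{H}\mathbf{U}^{-1}\mathbf{x})\,\mathbf{x}^{H}$, where I have used $\mathbf{U}^{-1}\mathbf{U} = \mathbf{I}_{M}$ and moved the scalar $\mathbf{x}^{H}\mathbf{U}^{-1}\mathbf{x}$ to the front of the remaining $\mathbf{x}^{H}$. Writing $\alpha \triangleq \mathbf{x}^{H}\mathbf{U}^{-1}\mathbf{x}$, this numerator equals $(1 + c\alpha)\mathbf{x}^{H}$, which cancels exactly against the denominator $1 + c\alpha$ and leaves $\mathbf{x}^{H}$, as required.

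There is essentially no hard step here: the identity is a consequence of the rank-one structure of the outer product $\mathbf{x}\mathbf{x}^{H}$, so that the only quantities arising after contracting with $\mathbf{x}^{H}$ and $\mathbf{U}^{-1}$ are scalars that can be factored out. The only point requiring any care is the well-definedness check above; the remainder is purely algebraic regrouping. As an alternative route, the same conclusion follows by left-multiplying the full Sherman–Morrison expansion $(\mathbf{U} + c\mathbf{x}\mathbf{x}^{H})^{-1} = \mathbf{U}^{-1} - c\,\mathbf{U}^{-1}\mathbf{x}\mathbf{x}^{H}\mathbf{U}^{-1}/(1 + c\mathbf{x}^{H}\mathbf{U}^{-1}\mathbf{x})$ by $\mathbf{x}^{H}$ and again collecting the scalar $\alpha$, but the direct verification is preferable since it avoids first having to establish the full inverse.
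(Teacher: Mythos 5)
Your proof is correct. Note that the paper itself offers no proof of this lemma at all---it is imported verbatim as equation (2.2) of the cited Silverstein--Bai reference---so your self-contained verification goes beyond what the paper does rather than diverging from it. The argument is sound: right-multiplying by the invertible matrix $\mathbf{U} + c\mathbf{x}\mathbf{x}^{H}$ is a legitimate equivalence, the expansion $\mathbf{x}^{H}\mathbf{U}^{-1}\left(\mathbf{U} + c\mathbf{x}\mathbf{x}^{H}\right) = (1 + c\alpha)\mathbf{x}^{H}$ with $\alpha = \mathbf{x}^{H}\mathbf{U}^{-1}\mathbf{x}$ is exactly the rank-one scalar-factoring step needed, and you correctly identified that the only point requiring care is $1 + c\alpha \neq 0$. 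Your determinant-lemma justification of that fact is valid, though it quietly imports another standard identity; a more economical route is available from your own computation: if $1 + c\alpha = 0$, then $\mathbf{x}^{H}\mathbf{U}^{-1}\left(\mathbf{U} + c\mathbf{x}\mathbf{x}^{H}\right) = \mathbf{0}^{H}$, and invertibility of $\mathbf{U} + c\mathbf{x}\mathbf{x}^{H}$ forces $\mathbf{x}^{H}\mathbf{U}^{-1} = \mathbf{0}^{H}$, hence $\mathbf{x} = \mathbf{0}$ and $\alpha = 0$, contradicting $1 + c\alpha = 0$. Your preference for the direct contraction over first establishing the full Sherman--Morrison inverse is also well judged, since the lemma only ever gets used in the paper (e.g., in the proof of Theorem 3 and the RM approximation derivations) in this contracted row-vector form.
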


\begin{lemma}\label{Lemma 2}\cite[Lemma 4 and 5]{Wagner_12_MISO}
	Let $\mathbf{A} \in \mathbb{C}^{M\times M}$ and $\mathbf{x},\mathbf{y} \sim \mathcal{CN} (0, \frac{1}{M}\mathbf{I}_{M}).$ Assume that $\mathbf{A}$ has uniformly bounded spectral norm (with respect to $M$) and that $\mathbf{x}$ and $\mathbf{y}$ are mutually independent and independent of $\mathbf{A}$. Then,
	\begin{align}
	\mathbf{x}^{H}\mathbf{A}\mathbf{x} - \frac{1}{M}\text{tr}\left(\mathbf{A}\right)\xrightarrow[M \rightarrow \infty]{\text{a.s.}} 0,
	\end{align}
	\begin{align}
	\label{eq:x_A_y}
	\mathbf{x}^{H}\mathbf{A}\mathbf{y} \xrightarrow[M \rightarrow \infty]{\text{a.s.}} 0.
	\end{align}
\end{lemma}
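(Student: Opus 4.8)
The plan is to establish both limits through the standard concentration argument for Gaussian quadratic and bilinear forms, reducing each almost sure statement to a higher-order moment bound combined with the Borel--Cantelli lemma. Throughout I would condition on $\mathbf{A}$, which is legitimate because $\mathbf{x}$ and $\mathbf{y}$ are independent of $\mathbf{A}$; the uniform bound on the spectral norm $\|\mathbf{A}\|_{2}$ (uniform in $M$) is the structural hypothesis that keeps every trace quantity controlled.

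For the first claim I would begin by verifying that $\frac{1}{M}\text{tr}(\mathbf{A})$ is exactly the conditional mean: since $\mathbb{E}\{\mathbf{x}\mathbf{x}^{H}\} = \frac{1}{M}\mathbf{I}_{M}$, we have $\mathbb{E}\{\mathbf{x}^{H}\mathbf{A}\mathbf{x}\} = \frac{1}{M}\text{tr}(\mathbf{A})$, so the quantity of interest is centered. The key estimate is then a bound on a high-order moment of the centered form. Exploiting the Gaussianity of the entries of $\mathbf{x}$ (so that Wick's theorem applies), I would compute $\mathbb{E}\{|\mathbf{x}^{H}\mathbf{A}\mathbf{x} - \frac{1}{M}\text{tr}(\mathbf{A})|^{4}\}$ and show it is bounded by $\frac{C'}{M^{2}}\|\mathbf{A}\|_{2}^{4}$ for a universal constant $C'$; equivalently one may simply invoke the Bai--Silverstein trace-lemma moment inequality. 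The spectral-norm hypothesis enters through $\text{tr}(\mathbf{A}\mathbf{A}^{H}) \leq M\|\mathbf{A}\|_{2}^{2}$, which dominates the leading term.

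For the second (bilinear) claim the argument is parallel but simpler. By independence and zero mean, $\mathbb{E}\{\mathbf{x}^{H}\mathbf{A}\mathbf{y}\} = 0$, and the second moment is
\begin{equation}
\mathbb{E}\{|\mathbf{x}^{H}\mathbf{A}\mathbf{y}|^{2}\} = \frac{1}{M^{2}}\text{tr}(\mathbf{A}\mathbf{A}^{H}) \leq \frac{\|\mathbf{A}\|_{2}^{2}}{M},
\end{equation}
and passing to the fourth moment again yields an $O(M^{-2})$ bound. In both cases, once the fourth moment decays like $M^{-2}$, Markov's inequality gives $\mathbb{P}(|\cdot| > \epsilon) \leq \frac{C''}{\epsilon^{4}M^{2}}$, which is summable in $M$; the Borel--Cantelli lemma then upgrades the resulting convergence in probability to almost sure convergence as $M \to \infty$, finishing the proof.

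The step I expect to be the main obstacle is the fourth-moment computation for the quadratic form: expanding $\mathbb{E}\{|\mathbf{x}^{H}\mathbf{A}\mathbf{x} - \frac{1}{M}\text{tr}(\mathbf{A})|^{4}\}$ through Wick pairings produces many terms indexed by the pairing structure, and one must verify that every surviving term is $O(M^{-2})$ after using cyclicity of the trace together with $\text{tr}((\mathbf{A}\mathbf{A}^{H})^{k}) \leq M\|\mathbf{A}\|_{2}^{2k}$. Organizing this combinatorics cleanly --- or short-circuiting it by quoting the packaged Bai--Silverstein inequality --- is where the real work lies; the remaining steps are routine.
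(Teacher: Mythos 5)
Your proposal is correct, but note that the paper itself offers no proof of this lemma at all: it is imported verbatim by citation from Wagner et al.\ (Lemmas 4 and 5 there), which in turn rest on the classical Bai--Silverstein trace lemma. What you have written is essentially the standard proof underlying that cited result, and the details check out. Conditioning on $\mathbf{A}$ is legitimate and the tower property converts the conditional Borel--Cantelli conclusion into the unconditional almost sure statement; the centering $\mathbb{E}\{\mathbf{x}^{H}\mathbf{A}\mathbf{x}\} = \frac{1}{M}\operatorname{tr}(\mathbf{A})$ is exact; and your claimed fourth-moment bounds are the right ones. Writing $\mathbf{x} = \frac{1}{\sqrt{M}}\tilde{\mathbf{x}}$ with $\tilde{\mathbf{x}} \sim \mathcal{CN}(0,\mathbf{I}_{M})$, the Wick (or Bai--Silverstein) computation gives $\mathbb{E}\{|\tilde{\mathbf{x}}^{H}\mathbf{A}\tilde{\mathbf{x}} - \operatorname{tr}(\mathbf{A})|^{4}\} \leq C\left(\left(\operatorname{tr}(\mathbf{A}\mathbf{A}^{H})\right)^{2} + \operatorname{tr}\left((\mathbf{A}\mathbf{A}^{H})^{2}\right)\right) \leq C' M^{2}\|\mathbf{A}\|_{2}^{4}$, so after dividing by $M^{4}$ the centered quadratic form has fourth moment $O(M^{-2})$, and similarly for the bilinear form (conditionally on $\mathbf{y}$, $\mathbf{x}^{H}\mathbf{A}\mathbf{y}$ is complex Gaussian with variance $\frac{1}{M}\|\mathbf{A}\mathbf{y}\|_{2}^{2}$, whose fourth moment is controlled by $\mathbb{E}\{\|\mathbf{A}\mathbf{y}\|_{2}^{4}\} \leq C\|\mathbf{A}\|_{2}^{4}$). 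Markov with exponent four then gives summable tail probabilities and Borel--Cantelli finishes it. The only simplification worth mentioning is that Gaussianity lets you avoid the general combinatorial bookkeeping you flag as the main obstacle: for $\mathcal{CN}$ entries all moments exist and the Wick expansion collapses to the two trace terms above, so the ``packaged'' inequality is genuinely only a convenience here, not a necessity.
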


\begin{lemma}\label{Lemma 3}\cite[Lemma 6]{Wagner_12_MISO}
Let $\mathbf{A}_{1}, \mathbf{A}_{2},...,$ with $\mathbf{A}_{M} \in \mathbb{C}^{M\times M}$, be deterministic with uniformly bounded spectral norm and $\mathbf{B}_{1}, \mathbf{B}_{2},...,$ with $\mathbf{B}_{M} \in \mathbb{C}^{M\times M}$, be random Hermitian, with eigenvalues $\lambda_{1}^{\mathbf{B}_{M}}\leq \lambda_{2}^{\mathbf{B}_{M}} \cdots \leq \lambda_{M}^{\mathbf{B}_{M}}$ such that, with probability 1, there exist $\epsilon > 0$ for which $\lambda_{1}^{\mathbf{B}_{M}} > \epsilon$ for all large $M$. Then for $v \in \mathbb{C}^{M\times 1}$
\begin{equation}
\frac{1}{M}\text{tr}\left(\mathbf{A}_{M}\mathbf{B}_{M}^{-1}\right) - \frac{1}{M}\text{tr}\left(\mathbf{A}_{M}\left(\mathbf{B}_{M} + \mathbf{vv}^{H}\right)^{-1}\right) \xrightarrow[M \rightarrow \infty]{\text{a.s.}} 0
\end{equation} 
almost surely, where $\mathbf{B}_{M}^{-1}$ and $\left(\mathbf{B}_{M} + \mathbf{vv}^{H}\right)^{-1}$ exist with probability 1.	
\end{lemma}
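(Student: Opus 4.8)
The plan is to collapse the difference of the two traces into a single rank-one quadratic form and then bound it deterministically on the almost-sure event where $\mathbf{B}_M$ is uniformly well-conditioned. First I would apply the matrix inversion lemma (Lemma \ref{Lemma 1}) with $\mathbf{U} = \mathbf{B}_M$, $c = 1$, and $\mathbf{x} = \mathbf{v}$ to write
\[
\left(\mathbf{B}_M + \mathbf{v}\mathbf{v}^H\right)^{-1} = \mathbf{B}_M^{-1} - \frac{\mathbf{B}_M^{-1}\mathbf{v}\mathbf{v}^H\mathbf{B}_M^{-1}}{1 + \mathbf{v}^H\mathbf{B}_M^{-1}\mathbf{v}},
\]
so that, using the cyclic property of the trace to turn the resulting rank-one term into a scalar,
\[
\frac{1}{M}\text{tr}\left(\mathbf{A}_M\left(\mathbf{B}_M^{-1} - \left(\mathbf{B}_M + \mathbf{v}\mathbf{v}^H\right)^{-1}\right)\right) = \frac{1}{M}\,\frac{\mathbf{v}^H\mathbf{B}_M^{-1}\mathbf{A}_M\mathbf{B}_M^{-1}\mathbf{v}}{1 + \mathbf{v}^H\mathbf{B}_M^{-1}\mathbf{v}}.
\]

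Next I would bound this scalar uniformly. Working on the probability-one event on which there exists $\epsilon > 0$ with $\lambda_1^{\mathbf{B}_M} > \epsilon$ for all large $M$, the matrix $\mathbf{B}_M$ is positive definite with $\|\mathbf{B}_M^{-1}\| \leq 1/\epsilon$. Substituting $\mathbf{u} = \mathbf{B}_M^{-1/2}\mathbf{v}$, the right-hand side becomes $\frac{1}{M}\cdot\frac{\mathbf{u}^H\mathbf{B}_M^{-1/2}\mathbf{A}_M\mathbf{B}_M^{-1/2}\mathbf{u}}{1 + \|\mathbf{u}\|_2^2}$, whose modulus is at most $\frac{1}{M}\|\mathbf{B}_M^{-1/2}\mathbf{A}_M\mathbf{B}_M^{-1/2}\|\cdot\frac{\|\mathbf{u}\|_2^2}{1 + \|\mathbf{u}\|_2^2}$. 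Since $\frac{\|\mathbf{u}\|_2^2}{1+\|\mathbf{u}\|_2^2}\leq 1$ and $\|\mathbf{B}_M^{-1/2}\mathbf{A}_M\mathbf{B}_M^{-1/2}\| \leq \|\mathbf{B}_M^{-1}\|\,\|\mathbf{A}_M\| \leq C/\epsilon$ under the uniform spectral bound $\|\mathbf{A}_M\|\leq C$, the entire expression is bounded in modulus by $C/(\epsilon M)$.

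Finally, along each sample path in this probability-one event, $\epsilon$ (and $C$) is a fixed positive number, so the deterministic bound $C/(\epsilon M)\to 0$ as $M\to\infty$ delivers the claimed almost-sure convergence. I expect the boundedness step to be the only genuinely substantive part: the crucial point is that the denominator $1 + \mathbf{v}^H\mathbf{B}_M^{-1}\mathbf{v}$ exactly tames the numerator through the inequality $\frac{\|\mathbf{u}\|_2^2}{1 + \|\mathbf{u}\|_2^2}\leq 1$, which removes any dependence on $\|\mathbf{v}\|_2$; this is what allows the result to hold for an arbitrary (possibly $M$-dependent) $\mathbf{v}$ without any assumption on its norm or distribution. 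Everything else is algebraic rearrangement together with the well-conditioning of $\mathbf{B}_M$ supplied by the hypothesis.
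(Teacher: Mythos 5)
Your proof is correct. Note that the paper itself does not prove Lemma \ref{Lemma 3}: it is imported verbatim, with a citation, from \cite{Wagner_12_MISO}, so there is no in-paper argument to compare against. Your derivation---Sherman--Morrison (Lemma \ref{Lemma 1}) to collapse the resolvent difference into the single scalar $\frac{1}{M}\,\mathbf{v}^{H}\mathbf{B}_{M}^{-1}\mathbf{A}_{M}\mathbf{B}_{M}^{-1}\mathbf{v}\big/\bigl(1+\mathbf{v}^{H}\mathbf{B}_{M}^{-1}\mathbf{v}\bigr)$, followed by the pathwise deterministic bound $\|\mathbf{A}_{M}\|/(\epsilon M)$ on the probability-one event $\lambda_{1}^{\mathbf{B}_{M}}>\epsilon$---is precisely the standard rank-one perturbation (trace lemma) argument on which the cited result rests, so your blind reconstruction is faithful to the source rather than a genuinely different route. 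Two of your observations are worth emphasizing: first, positive definiteness of $\mathbf{B}_{M}$ on that event both legitimizes the substitution $\mathbf{u}=\mathbf{B}_{M}^{-1/2}\mathbf{v}$ and keeps the denominator $1+\mathbf{v}^{H}\mathbf{B}_{M}^{-1}\mathbf{v}\geq 1$, so the division is never problematic; second, the inequality $\|\mathbf{u}\|_{2}^{2}/(1+\|\mathbf{u}\|_{2}^{2})\leq 1$ makes the final bound independent of $\|\mathbf{v}\|_{2}$, which is exactly what the paper needs in Appendix \ref{Proof for RM Approximations}, where the lemma is applied with $\mathbf{v}=\sqrt{\rho_{u}\eta_{k}}\hat{\mathbf{g}}_{k}$ whose squared norm grows on the order of $\sum_{m}\gamma_{mk}$ with $M$.
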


\subsection{Two Theorems \label{Theorems}}
\begin{theorem}\label{Theorem 1}
Let $m = 1,..., M$ and let $\hat{g}_{mk}$ and $\hat{g}_{ml}$ be two channel coefficients estimated using random pilots and LMMSE channel estimation as given in Section \ref{System Model}. Let $\tau$ be the length of the random pilots. Then, for a fixed $m$,
\begin{equation}
\label{eq:theorem_1}
\text{Cov}[\hat{g}_{mk},\hat{g}_{ml}] \xrightarrow[\tau \rightarrow \infty]{\text{a.s.}} 0 \ \text{for}\ k,\,l = 1,...,K\ \text{and}\ k\neq l.
\end{equation}
\end{theorem}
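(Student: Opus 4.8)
The plan is to reduce the covariance to a single scalar bilinear form in the pilots and then exploit the asymptotic orthogonality of independent random unit vectors. First I would write the LMMSE estimate (\ref{eq:est_g}) as a linear functional of the received pilot, $\hat{g}_{mk} = \mathbf{a}_{mk}^{H}\mathbf{y}_m$ with $\mathbf{a}_{mk}$ as in (\ref{eq:gamma}), where the covariance is taken over small-scale fading and noise for a fixed pilot realization and the a.s.\ convergence is over the random pilots. Since the channels and noise are zero-mean, $\mathbb{E}[\hat{g}_{mk}]=0$, so $\text{Cov}[\hat{g}_{mk},\hat{g}_{ml}]=\mathbb{E}[\hat{g}_{mk}\hat{g}_{ml}^{*}]=\mathbf{a}_{mk}^{H}\,\mathbb{E}[\mathbf{y}_m\mathbf{y}_m^{H}]\,\mathbf{a}_{ml}$. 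Using the pilot model (\ref{eq:re_pilot}) and $\mathbb{E}[\mathbf{g}_{[m]}\mathbf{g}_{[m]}^{H}]=\mathbf{B}_m$ I obtain $\mathbb{E}[\mathbf{y}_m\mathbf{y}_m^{H}] = \tau\rho_p\boldsymbol{\Psi}\mathbf{B}_m\boldsymbol{\Psi}^{H}+\mathbf{I}_\tau \triangleq \mathbf{C}_m$, precisely the matrix inverted in (\ref{eq:est_g}). Substituting $\mathbf{a}_{mk}$ and cancelling $\mathbf{C}_m^{-1}\mathbf{C}_m\mathbf{C}_m^{-1}=\mathbf{C}_m^{-1}$ collapses everything to
\begin{equation}
\text{Cov}[\hat{g}_{mk},\hat{g}_{ml}] = \tau\rho_p\beta_{mk}\beta_{ml}\,\boldsymbol{\psi}_k^{H}\mathbf{C}_m^{-1}\boldsymbol{\psi}_l.
\end{equation}

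Next I would isolate the dependence of $\mathbf{C}_m^{-1}$ on the two pilots of interest. Writing $\mathbf{C}_m^{(kl)} = \mathbf{I}_\tau + \tau\rho_p\sum_{k'\neq k,l}\beta_{mk'}\boldsymbol{\psi}_{k'}\boldsymbol{\psi}_{k'}^{H}$ for the part of $\mathbf{C}_m$ that involves neither $\boldsymbol{\psi}_k$ nor $\boldsymbol{\psi}_l$, I would apply the matrix inversion lemma (Lemma \ref{Lemma 1}) twice, peeling $\boldsymbol{\psi}_k$ off from the left and $\boldsymbol{\psi}_l$ off from the right, to get
\begin{equation}
\boldsymbol{\psi}_k^{H}\mathbf{C}_m^{-1}\boldsymbol{\psi}_l = \frac{\boldsymbol{\psi}_k^{H}\big(\mathbf{C}_m^{(kl)}\big)^{-1}\boldsymbol{\psi}_l}{\big(1+\tau\rho_p\beta_{mk}\,q_k\big)\big(1+\tau\rho_p\beta_{ml}\,q_l\big)},
\end{equation}
where $q_k,q_l$ are quadratic forms of the type $\boldsymbol{\psi}^{H}(\cdots)^{-1}\boldsymbol{\psi}$. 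The point of this step is that $\mathbf{C}_m^{(kl)}$ is now independent of $\boldsymbol{\psi}_k$ and $\boldsymbol{\psi}_l$, and, since $\mathbf{C}_m^{(kl)}\succeq\mathbf{I}_\tau$, every inverse appearing here has spectral norm at most one, uniformly in $\tau$.

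Finally I would invoke the large-system deterministic-equivalent lemmas. Identifying a pilot drawn uniformly from the complex unit sphere with $\mathbf{z}/\|\mathbf{z}\|_2$, $\mathbf{z}\sim\mathcal{CN}(0,\mathbf{I}_\tau)$, so that $\boldsymbol{\psi}_k$ behaves like a $\mathcal{CN}(0,\tfrac{1}{\tau}\mathbf{I}_\tau)$ vector up to the factor $\|\mathbf{z}\|_2^2/\tau\to 1$, Lemma \ref{Lemma 2} supplies two facts. By (\ref{eq:x_A_y}) the numerator $\boldsymbol{\psi}_k^{H}(\mathbf{C}_m^{(kl)})^{-1}\boldsymbol{\psi}_l\xrightarrow{\text{a.s.}}0$, because $\boldsymbol{\psi}_k$ and $\boldsymbol{\psi}_l$ are independent of each other and of the matrix. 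Moreover each diagonal form obeys $q_k-\tfrac{1}{\tau}\text{tr}((\cdots)^{-1})\xrightarrow{\text{a.s.}}0$, while with $K$ fixed the relevant inverse has all but $O(K)$ eigenvalues equal to one, so $\tfrac{1}{\tau}\text{tr}((\cdots)^{-1})\to 1$ and hence $q_k,q_l\to 1$. Consequently each denominator grows like $\tau\rho_p\beta$, the prefactor $\tau\rho_p\beta_{mk}\beta_{ml}/[(1+\tau\rho_p\beta_{mk}q_k)(1+\tau\rho_p\beta_{ml}q_l)]=O(1/\tau)$, and multiplying this vanishing factor by a numerator that also tends to zero yields $\text{Cov}[\hat{g}_{mk},\hat{g}_{ml}]\xrightarrow{\text{a.s.}}0$.

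I expect the main obstacle to be the scaling bookkeeping rather than any single hard estimate: the explicit $\tau$ prefactor means that a crude bound on the bilinear form (which is only $O(\tau^{-1/2})$) would leave an $O(\tau^{1/2})$ quantity, so the argument genuinely requires showing that the two denominators each grow linearly in $\tau$ to cancel it. Making the identification of unit-sphere pilots with scaled Gaussians rigorous, and checking that the rank-one removals do not disturb the $\tfrac{1}{\tau}\text{tr}(\cdot)\to 1$ normalization (where Lemma \ref{Lemma 3} can be invoked), are the places where care is needed.
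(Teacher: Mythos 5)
Your proof is correct, and it reaches the result by a genuinely cleaner decomposition than the paper's, even though the asymptotic machinery is the same (Lemma \ref{Lemma 1} applied twice to peel off the two pilots, then Lemma \ref{Lemma 2}). The paper's proof, given in (\ref{eq:cov_g})--(\ref{eq:psi_Z_as}), expands $\mathbf{y}_m$ inside the expectation and splits the covariance into a channel part $T_1$, a noise part $T_2$, and vanishing cross terms; it must then kill $T_1$ term by term over the $K$-fold sum (for every $i$, at least one of $\boldsymbol{\psi}_i^H\mathbf{a}_{mk}$ and $\mathbf{a}_{ml}^H\boldsymbol{\psi}_i$ has mismatched indices because $k\neq l$) and kill $T_2=\mathbf{a}_{ml}^H\mathbf{a}_{mk}$ by a second, separate inversion-lemma argument. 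You instead exploit the LMMSE structure itself: $\mathbb{E}[\mathbf{y}_m\mathbf{y}_m^H]$ equals exactly the matrix $\mathbf{C}_m$ (the paper's $\mathbf{Z}_m$ in (\ref{eq:Z_m})) that is inverted in $\mathbf{a}_{mk}$, so the covariance collapses in one line to $\tau\rho_p\beta_{mk}\beta_{ml}\,\boldsymbol{\psi}_k^H\mathbf{C}_m^{-1}\boldsymbol{\psi}_l$, which merges $T_1$ and $T_2$ and eliminates the sum over $i$; a single double peeling, the exact analogue of the paper's (\ref{eq:psi_a_2}), then finishes the argument. Your version also buys precision where the paper is loose: the paper concludes $\boldsymbol{\psi}_i^H\mathbf{a}_{mk}\to 0$, and hence $T_1\to 0$, by citing (\ref{eq:x_A_y}) alone, without accounting for the $\sqrt{\tau\rho_p}$ factor inside $\mathbf{a}_{mk}$ or the additional $\tau\rho_p$ multiplying each term of $T_1$; the missing ingredient is precisely the $\Theta(\tau)$ growth of the inversion-lemma denominators that you track explicitly via $q_k,q_l\to 1$. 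In fact your reduction is stronger than you claim: since $\mathbf{C}_m^{(kl)}\succeq \mathbf{I}_\tau$ and the pilots are unit-norm, Cauchy--Schwarz bounds the numerator $|\boldsymbol{\psi}_k^H(\mathbf{C}_m^{(kl)})^{-1}\boldsymbol{\psi}_l|$ by $1$, so the prefactor divided by the two denominators already decays like $1/\tau$ and the covariance vanishes even without invoking (\ref{eq:x_A_y}). Finally, the two caveats you flag --- identifying sphere-uniform pilots with normalized Gaussians, and preserving the trace normalization under rank-one deflation via Lemma \ref{Lemma 3} --- are genuine, but they apply equally to the paper's own proof, which simply asserts that the pilots ``are Gaussian'' when invoking Lemma \ref{Lemma 2}.
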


\begin{proof}
\begin{equation}
	\label{eq:cov_g}
	\begin{aligned}
		&\text{cov}[\hat{g}_{mk},\hat{g}_{ml}] \\
		&= \mathbb{E}\bigg[\Big(\mathbf{a}_{mk}^{H}\Big(\sqrt{\tau\rho_{p}}\sum_{i=1}^{K}g_{mi}\boldsymbol{\psi}_{i} + \mathbf{w}_{m}\Big)\Big)^{H}\cdot\\
		&\qquad\quad\quad\mathbf{a}_{ml}^{H}\Big(\sqrt{\tau\rho_{p}}\sum_{j=1}^{K}g_{mj}\boldsymbol{\psi}_{j} + \mathbf{w}_{m}\Big)\bigg]\\
		&= \underbrace{\sum_{i=1}^{K}\sum_{j=1}^{K}\tau\rho_{p}\mathbb{E}\left[g_{mi}^{*}g_{mj}\right]\boldsymbol{\psi}_{i}^{H}\mathbf{a}_{mk}\mathbf{a}_{ml}^{H}\boldsymbol{\psi}_{j}}_{T_{1}} +\\ &\underbrace{\mathbb{E}\left[\mathbf{w}_{m}^{H}\mathbf{a}_{mk}\mathbf{a}_{ml}^{H}\mathbf{w}_{m}\right]}_{T_{2}} + \underbrace{(\text{cross terms of $\mathbb{E}$ of $g_{mi}^*$ and $\mathbf{w}_{m}$ })}_{T_{3}}.
	\end{aligned}
\end{equation}
Due to the independence between $g_{mi}$ and each component of $\mathbf{w}_{m}$, $T_{3} = 0$. On the other hand, $g_{mi}$ and $g_{mj}$ are also independent, we get:
\begin{equation}
\begin{aligned}
	\label{eq:T_1}
	T_{1} &= \sum_{i=1}^{K}\tau\rho_{p}\mathbb{E}\left[|g_{mi}|^2\right]\boldsymbol{\psi}_{i}^{H}\mathbf{a}_{mk}\mathbf{a}_{ml}^{H}\boldsymbol{\psi}_{i}\\
	&=\sum_{i=1}^{K}\tau\rho_{p}\beta_{mi}\boldsymbol{\psi}_{i}^{H}\mathbf{a}_{mk}\mathbf{a}_{ml}^{H}\boldsymbol{\psi}_{i}.
\end{aligned}
\end{equation}
Define
\begin{equation}
\label{eq:Z_m}
\mathbf{Z}_{m} = \tau\rho_{p}\mathbf{\Psi}\mathbf{B}_{m}\mathbf{\Psi}^{H} + \mathbf{I}_{\tau} = \tau\rho_{p}\sum_{k=1}^{K}\beta_{mk}\boldsymbol{\psi}_{k}\boldsymbol{\psi}_{k}^{H} + \mathbf{I}_{\tau}.
\end{equation}
Then $\mathbf{a}_{mk} = \sqrt{\tau\rho_{p}}\beta_{mk}\mathbf{Z}_{m}^{-1}\boldsymbol{\psi}_{k}$. Let $\mathbf{Z}_{mi} = \mathbf{Z}_{m} - \tau\rho_{p}\beta_{mi}\boldsymbol{\psi}_{i}\boldsymbol{\psi}_{i}^{H}$, and apply (\ref{eq:Z_m}) and lemma 1 to $\boldsymbol{\psi}_{i}^{H}\mathbf{a}_{mk}$, we obtain:
\begin{equation}
\label{eq:psi_a}
	\boldsymbol{\psi}_{i}^{H}\mathbf{a}_{mk} = \frac{\sqrt{\tau\rho_{p}}\beta_{mk}\boldsymbol{\psi}_{i}^{H}\mathbf{Z}_{mi}^{-1}\boldsymbol{\psi}_{k}}{1 + \tau\rho_{p}\beta_{mi}\boldsymbol{\psi}_{i}^{H}\mathbf{Z}_{mi}^{-1}\boldsymbol{\psi_{i}}}.
\end{equation}
Let $\mathbf{Z}_{mik} = \mathbf{Z}_{mi} - \tau\rho_{p}\beta_{mk}\boldsymbol{\psi}_{k}\boldsymbol{\psi}_{k}^{H}$ and apply it together with lemma 1 to $\boldsymbol{\psi}_{i}^{H}\mathbf{Z}_{mi}^{-1}$ in the numerator of (\ref{eq:psi_a}), we obtain:
\begin{equation}
\begin{aligned}
	\label{eq:psi_a_2}
	&\boldsymbol{\psi}_{i}^{H}\mathbf{a}_{mk} =\\ &\frac{\sqrt{\tau\rho_{p}}\beta_{mk}\boldsymbol{\psi}_{i}^{H}\mathbf{Z}_{mik}^{-1}\boldsymbol{\psi}_{k}}{\left(1 + \tau\rho_{p}\beta_{mi}\boldsymbol{\psi}_{i}^{H}\mathbf{Z}_{mi}^{-1}\boldsymbol{\psi}_{i}\right)\left(1 + \tau\rho_{p}\beta_{mk}\boldsymbol{\psi}_{k}^{H}\mathbf{Z}_{mik}^{-1}\boldsymbol{\psi}_{k}\right)}.
\end{aligned}
\end{equation}
In (\ref{eq:psi_a_2}), $\mathbf{Z}_{mik}$ does not depend on $\boldsymbol{\psi}_{i}$ and $\boldsymbol{\psi}_{k}$, and $\boldsymbol{\psi}_{i}$ and $\boldsymbol{\psi}_{k}$ are Gaussian and independent. Hence, by applying (\ref{eq:x_A_y}), we can get:
\begin{equation}
	\boldsymbol{\psi}_{i}^{H}\mathbf{a}_{mk}\xrightarrow[\tau \rightarrow \infty]{\text{a.s.}} 0\;\text{for}\; i\neq k.
\end{equation}
In the same way, $\mathbf{a}_{ml}^{H}\boldsymbol{\psi}_{i} \xrightarrow[\tau \rightarrow \infty]{\text{a.s.}} 0$ for $i \neq l$. Since $l\neq k$, we obtain $T_{1} \xrightarrow[\tau \rightarrow \infty]{\text{a.s.}} 0$ (if $K$ is finite).

Now we consider $T_{2}$:
\begin{equation}
\begin{aligned}
\label{eq:T_2}
T_{2} =\;& \mathbb{E}\left[\text{tr}\left(\mathbf{w}_{m}\mathbf{w}_{m}^{H}\mathbf{a}_{mk}\mathbf{a}_{ml}^{H}\right)\right] = \text{tr}\left(\mathbb{E}\left[\mathbf{w}_{m}\mathbf{w}_{m}^{H}\right]\mathbf{a}_{mk}\mathbf{a}_{ml}^{H}\right) \\=\;& \text{tr}\left(\mathbf{I}_{\tau}\mathbf{a}_{mk}\mathbf{a}_{ml}^{H}\right)
= \mathbf{a}_{ml}^{H}\mathbf{a}_{mk}\\ 
=\;&\tau\rho_{p}\beta_{mk}\beta_{ml}\boldsymbol{\psi}_{l}^{H}\mathbf{Z}_{m}^{-1}\mathbf{Z}_{m}^{-1}\boldsymbol{\psi}_{k}.
\end{aligned}
\end{equation}
Let $\mathbf{Z}_{ml}^{-1} = \mathbf{Z}_{m} - \tau\rho_{p}\beta_{ml}\boldsymbol{\psi}_{l}\boldsymbol{\psi}_{l}^{H}$, and apply it together with lemma 1 to $\boldsymbol{\psi}_{l}^{H}\mathbf{Z}_{m}^{-1}$, we obtain:
\begin{equation}
\label{eq:psi_Z}
\boldsymbol{\psi}_{l}^{H}\mathbf{Z}_{m}^{-1} = \frac{\boldsymbol{\psi}_{l}^{H}\mathbf{Z}_{ml}^{-1}}{1 + \tau\rho_{p}\beta_{ml}\boldsymbol{\psi}_{l}^{H}\mathbf{Z}_{ml}^{-1}\boldsymbol{\psi}_{l}}.
\end{equation}
Since $\boldsymbol{\psi}_{l}$ is Gaussian and does not depend on $\mathbf{Z}_{ml}$, we obtain:
\begin{equation}
	\label{eq:psi_Z_as}
	\boldsymbol{\psi}_{l}^{H}\mathbf{Z}_{m}^{-1} \xrightarrow[\tau \rightarrow \infty]{\text{a.s.}} 0.
\end{equation}  
In the same way $\mathbf{Z}_{m}^{-1}\boldsymbol{\psi}_{k} \xrightarrow[\tau \rightarrow \infty]{\text{a.s.}} 0$, so $T_{2} \xrightarrow[\tau \rightarrow \infty]{\text{a.s.}} 0$. Thus we prove that $\text{cov}[\hat{g}_{mk},\hat{g}_{ml}] \xrightarrow[\tau \rightarrow \infty]{\text{a.s.}} 0$ (if $K$ is finite).  
\end{proof}

\begin{theorem}\label{Theorem 2} \cite{Hoydis_13_UL/DL}, \cite{Wagner_12_MISO}
Let $\mathbf{S}_{M} \in \mathbb{C}^{M\times M}$ be Hermitian non-negative definite and let $\mathbf{X} \in \mathbb{C}^{M \times K}$ be random with independent column vectors $\mathbf{x}_{k} \sim \mathcal{CN} (0, \frac{1}{M}\mathbf{R}_{k})$. Define $\mathbf{Q}_{M} \in \mathbb{C}^{M\times M}$ deterministic and assume that $\mathbf{Q}_{M}$ and $\mathbf{R}_{k},\, k = 1,..., K$ have uniformly bounded spectral norms (with respect to M). Then, for any $z > 0$, 
\begin{equation}
\begin{aligned}
	&\frac{1}{M}\text{tr}\left(\mathbf{Q}_{M}\left(\mathbf{X}\mathbf{X}^{H} + \mathbf{S}_{M} + z\mathbf{I}_{M}\right)^{-1}\right) - \frac{1}{M}\text{tr}\left(\mathbf{Q}_{M}\mathbf{T}(z)\right)\\
	&\xrightarrow[M \rightarrow \infty]{\text{a.s.}} 0,
\end{aligned}
\end{equation}
where $\mathbf{T}(z) \in \mathbb{C}^{M\times M}$ is given by
\begin{equation}
\mathbf{T}(z) = \left(\frac{1}{M}\sum_{k=1}^{K}\frac{\mathbf{R}_{k}}{1 + e_{k}(z)} + \mathbf{S}_{M} + z\mathbf{I}_{M}\right)^{-1}.
\end{equation}
Here, $e_{k}(z) = \lim_{t\rightarrow \infty} e_{k}^{(t)}(z)$ and $e_{k}^{(t)}(z)$ is obtained by
\begin{equation}
e_{k}^{(t)} = \frac{1}{M}\text{tr}\Bigg(\mathbf{R}_{k}\bigg(\frac{1}{M}\sum_{k'=1}^{K}\frac{\mathbf{R}_{k'}}{1 + e_{k'}^{(t-1)}(z)} + \mathbf{S}_{M} + z\mathbf{I}_{M}\bigg)^{-1}\Bigg).
\end{equation} 
where $t = 1, 2,...$ and $e_{k}^{(0)}(z) = 1/z$ for $k = 1, 2,..., K$. 
\end{theorem}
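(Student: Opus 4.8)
The plan is to follow the standard deterministic-equivalent methodology for resolvents of separable sample-covariance matrices, using Lemmas~\ref{Lemma 1}--\ref{Lemma 3}. Abbreviate $\mathbf{H} = \left(\mathbf{X}\mathbf{X}^{H} + \mathbf{S}_{M} + z\mathbf{I}_{M}\right)^{-1}$ and $\mathbf{T} = \mathbf{T}(z)$. Since $z > 0$ and both $\mathbf{X}\mathbf{X}^{H}$ and $\mathbf{S}_{M}$ are non-negative definite, $\|\mathbf{H}\|_{2} \le 1/z$, and likewise $\|\mathbf{T}\|_{2}\le 1/z$; together with the uniform bounds on $\mathbf{Q}_{M}$ and $\mathbf{R}_{k}$ this keeps every normalized trace and quadratic form bounded, which controls the error terms. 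Before touching the random matrix, I would first settle the deterministic side: existence, uniqueness, and convergence of the iteration defining $e_{k}(z)$. Viewing $e_{k}^{(t)}$ in the theorem as one coordinate of a map $\mathbf{e}\mapsto \mathbf{f}(\mathbf{e})$ on the positive orthant, one checks that $\mathbf{f}$ is positive, monotone increasing, and scalable, so it is a standard interference function with a unique positive fixed point to which the iteration started from $e_{k}^{(0)}=1/z$ converges (equivalently, monotonicity of the underlying Stieltjes transforms yields a monotone bounded, hence convergent, sequence). This makes $\mathbf{T}$ well defined.

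For the random part, introduce the leave-one-out resolvent $\mathbf{H}_{k} = \left(\sum_{j\neq k}\mathbf{x}_{j}\mathbf{x}_{j}^{H} + \mathbf{S}_{M} + z\mathbf{I}_{M}\right)^{-1}$, which is independent of $\mathbf{x}_{k}$ and still satisfies $\|\mathbf{H}_{k}\|_{2}\le 1/z$. Lemma~\ref{Lemma 1} (Sherman--Morrison) gives $\mathbf{x}_{k}^{H}\mathbf{H} = \mathbf{x}_{k}^{H}\mathbf{H}_{k}/(1+\mathbf{x}_{k}^{H}\mathbf{H}_{k}\mathbf{x}_{k})$, with the denominator bounded below by $1$. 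Writing $\mathbf{x}_{k} = \mathbf{R}_{k}^{1/2}\tilde{\mathbf{x}}_{k}$ with $\tilde{\mathbf{x}}_{k}\sim\mathcal{CN}(0,\tfrac{1}{M}\mathbf{I}_{M})$ and applying Lemma~\ref{Lemma 2} (using the independence of $\tilde{\mathbf{x}}_{k}$ and $\mathbf{H}_{k}$), the quadratic forms concentrate: $\mathbf{x}_{k}^{H}\mathbf{H}_{k}\mathbf{x}_{k} - \tfrac{1}{M}\text{tr}(\mathbf{R}_{k}\mathbf{H}_{k})\to 0$ and $\mathbf{x}_{k}^{H}\mathbf{H}_{k}\mathbf{Q}_{M}\mathbf{T}\mathbf{x}_{k} - \tfrac{1}{M}\text{tr}(\mathbf{R}_{k}\mathbf{H}_{k}\mathbf{Q}_{M}\mathbf{T})\to 0$ almost surely. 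Lemma~\ref{Lemma 3} then lets me drop the leave-one-out correction, replacing $\mathbf{H}_{k}$ by $\mathbf{H}$ inside normalized traces; taking $\mathbf{Q}_{M}=\mathbf{R}_{k}$ in this step is what identifies the common limit of $\tfrac{1}{M}\text{tr}(\mathbf{R}_{k}\mathbf{H}_{k})$ and $\tfrac{1}{M}\text{tr}(\mathbf{R}_{k}\mathbf{H})$ with the fixed point $e_{k}$.

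The two sides are then joined through the resolvent identity $\mathbf{T}-\mathbf{H} = \mathbf{T}\left(\mathbf{H}^{-1}-\mathbf{T}^{-1}\right)\mathbf{H}$, where $\mathbf{H}^{-1}-\mathbf{T}^{-1} = \sum_{k}\mathbf{x}_{k}\mathbf{x}_{k}^{H} - \tfrac{1}{M}\sum_{k}\tfrac{\mathbf{R}_{k}}{1+e_{k}}$. Inserting the Sherman--Morrison identity into the $\mathbf{x}_{k}\mathbf{x}_{k}^{H}$ term yields
\begin{equation*}
\begin{aligned}
\frac{1}{M}\text{tr}\!\left(\mathbf{Q}_{M}(\mathbf{H}-\mathbf{T})\right) = \frac{1}{M}\sum_{k}\bigg(&\frac{\tfrac{1}{M}\text{tr}(\mathbf{Q}_{M}\mathbf{T}\mathbf{R}_{k}\mathbf{H})}{1+e_{k}}\\
&- \frac{\mathbf{x}_{k}^{H}\mathbf{H}_{k}\mathbf{Q}_{M}\mathbf{T}\mathbf{x}_{k}}{1+\mathbf{x}_{k}^{H}\mathbf{H}_{k}\mathbf{x}_{k}}\bigg).
\end{aligned}
\end{equation*}
By the concentration statements above, in each summand the random numerator and denominator converge to $\tfrac{1}{M}\text{tr}(\mathbf{R}_{k}\mathbf{H}\mathbf{Q}_{M}\mathbf{T})$ and $1+e_{k}$ respectively, so the two fractions cancel and every summand tends to $0$ almost surely, giving the claim.

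The hard part will be two control issues rather than the algebra. First, the identification of $e_{k}$ as the true limit of $\tfrac{1}{M}\text{tr}(\mathbf{R}_{k}\mathbf{H})$ is self-referential: one cannot substitute $e_{k}$ for $\mathbf{x}_{k}^{H}\mathbf{H}_{k}\mathbf{x}_{k}$ until the deterministic equivalent for $\tfrac{1}{M}\text{tr}(\mathbf{R}_{k}\mathbf{H})$ is itself in hand, so this case must be proved first (for instance through a bound on $\max_{k}$ of the error closed by the fixed-point map) and only then bootstrapped to general $\mathbf{Q}_{M}$. Second, the per-term almost-sure limits must be made uniform over the $K$ growing summands and upgraded from convergence in probability; this is where the uniform spectral bounds, the denominators being bounded below by $1$, and a higher-moment/Borel--Cantelli argument are needed to sum the errors while retaining the almost-sure mode.
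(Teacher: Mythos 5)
The paper contains no proof of this theorem at all: the statement is imported from the cited references (Hoydis et al.; Wagner et al.), and the paper's ``proof'' in Appendix B is a one-line pointer to the latter. Your outline follows essentially the same route as the proof in that cited work --- the standard deterministic-equivalent argument via leave-one-out resolvents and Lemma~\ref{Lemma 1}, concentration of quadratic forms via Lemma~\ref{Lemma 2}, rank-one perturbation via Lemma~\ref{Lemma 3}, the resolvent identity, and a fixed-point bootstrap --- and you correctly isolate the two genuinely delicate steps (the self-referential identification of $e_k(z)$ with $\tfrac{1}{M}\mathrm{tr}(\mathbf{R}_k\mathbf{H})$, and uniform almost-sure control over the $K$ growing summands via higher moments and Borel--Cantelli), so the proposal is consistent with, and more detailed than, what the paper itself provides.
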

\begin{proof}
	The proof of Theorem \ref{Theorem 2} is given in \cite{Wagner_12_MISO}. 
\end{proof}

\subsection{Derivations for RM Approximations}\label{Proof for RM Approximations}
We first derive RM Approximation 2. Define $\boldsymbol{\Omega} = \rho_{u}\sum_{k'=1}^{K}\eta_{k'}\hat{\mathbf{g}}_{k'}\hat{\mathbf{g}}_{k'}^{H} + \mathbf{D}$, $\boldsymbol{\Omega}_{k} = \rho_{u}\sum_{k'\neq k}^{K}\eta_{k'}\hat{\mathbf{g}}_{k'}\hat{\mathbf{g}}_{k'}^{H} + \mathbf{D}$, where $\mathbf{D}$ is defined in (\ref{eq:def_D}).  Using (\ref{eq:est_err_g}) we get that  $\beta_{mk} - \gamma_{mk} \geq 0$. Hence the sum in $\mathbf{D}$ is a non-negative definite matrix and hence $\frac{\mathbf{D}}{M}$ can be considered as $\mathbf{S}_M + z \mathbf{I}_M$ used in Theorem \ref{Theorem 2}.  Using Lemma \ref{Lemma 1}, we can represent $\rho_{u}\eta_{k}\hat{\mathbf{g}}_{k}^{H}\boldsymbol{\Omega}^{-1}\hat{\mathbf{g}}_{k}$ in the numerator and denominator of (\ref{eq:SINR_k_UL_MMSE}) as  
\begin{equation}
\label{eq:lemma1}
\rho_{u}\eta_{k}\hat{\mathbf{g}}_{k}^{H}\boldsymbol{\Omega}^{-1}\hat{\mathbf{g}}_{k} = \frac{\rho_{u}\eta_{k}\hat{\mathbf{g}}_{k}^{H}\boldsymbol{\Omega}_{k}^{-1}\hat{\mathbf{g}}_{k}}{1 + \rho_{u}\eta_{k}\hat{\mathbf{g}}_{k}^{H}\boldsymbol{\Omega}_{k}^{-1}\hat{\mathbf{g}}_{k}}.
\end{equation}
Using Lemma \ref{Lemma 2}, we get
\begin{equation}
\label{eq:lemma2}
\rho_{u}\eta_{k}\hat{\mathbf{g}}_{k}^{H}\boldsymbol{\Omega}_{k}^{-1}\hat{\mathbf{g}}_{k} - \rho_{u}\eta_{k}\text{tr}\left(\mathbf{\Gamma}_{k}\boldsymbol{\Omega}_{k}^{-1}\right) \xrightarrow[M \rightarrow \infty]{\text{a.s.}} 0.
\end{equation}
It is noted that, in CF mMIMO IoT systems, $\text{Cov}[\hat{g}_{mk}, \hat{g}_{ml}]$ is not strictly 0 when random pilots are applied. However, based on Theorem \ref{Theorem 1}, we can directly use Theorem \ref{Theorem 2} to derive RM approximations. Applying Theorem \ref{Theorem 2} to (\ref{eq:lemma2}), we can get
\begin{equation}
\rho_{u}\eta_{k}\text{tr}\left(\mathbf{\Gamma}_{k}\boldsymbol{\Omega}_{k}^{-1}\right) - \frac{\rho_{u}\eta_{k}}{M}\text{tr}\left(\mathbf{\Gamma}_{k}\mathbf{T}_{k}\right) \xrightarrow[M \rightarrow \infty]{\text{a.s.}} 0.
\end{equation}    
where $\mathbf{T}_{k}$ is defined in (\ref{eq:T_k}). Then $\rho_{u}\eta_{k}\hat{\mathbf{g}}_{k}^{H}\boldsymbol{\Omega}_{k}^{-1}\hat{\mathbf{g}}_{k}$ is substituted by $\frac{\rho_{u}\eta_{k}}{M}\text{tr}\left(\mathbf{\Gamma}_{k}\mathbf{T}_{k}\right)$ in (\ref{eq:lemma1}), and (\ref{eq:lemma1}) is further substituted into (\ref{eq:SINR_k_UL_MMSE}) to obtain RM Approximation 2 in (\ref{eq:RMT_Approx2}). 

RM Approximation 1 is obtained by using Lemma \ref{Lemma 3}. First we note that $\boldsymbol{\Omega}_k$ is positive definite. Indeed, let ${\bf v} =  [v_1, v_2,..., v_M]^T \in \mathbb{C}^{M \times 1}$ be any non-zero vector, then 
\begin{align*}
	\mathbf{v}^H\boldsymbol{\Omega}_k \mathbf{v} = &\rho_u \sum_{k'\neq k}^{K}\eta_{k'}|\mathbf{v}^H\hat{\mathbf{g}}_{k'}|^2\\
	& + \rho_u\sum_{k'=1}^{K}\eta_{k'}\sum_{m=1}^{M}|v_m|^2 (\beta_{mk'} - \gamma_{mk'}) + \|{\bf v}\|_2^2.
\end{align*}
Since $\rho_u > 0, 0 \leq \eta_{k'} \leq 1,\ \forall k' , |\mathbf{v}^H\hat{\mathbf{g}}_{k'}| \geq 0, \beta_{mk} - \gamma_{mk'} \geq 0$ and $\|\mathbf{v}\|_2^2 > 0$, we obtain $\mathbf{v}^H\boldsymbol{\Omega}_k \mathbf{v} > 0$. Thus $\boldsymbol{\Omega}_k$ is positive definite. 
Now applying Lemma \ref{Lemma 3} to $\rho_{u}\eta_{k}\text{tr}\left(\mathbf{\Gamma}_{k}\mathbf{\Omega}_{k}^{-1}\right)$, we get
\begin{equation}
\rho_{u}\eta_{k}\text{tr}\left(\mathbf{\Gamma}_{k}\mathbf{\Omega}_{k}^{-1}\right) - \rho_{u}\eta_{k}\text{tr}\left(\mathbf{\Gamma}_{k}\mathbf{\Omega}^{-1}\right) \xrightarrow[M \rightarrow \infty]{\text{a.s.}} 0.
\end{equation}
Using Theorem \ref{Theorem 2} again, we get
\begin{equation}
\rho_{u}\eta_{k}\text{tr}\left(\mathbf{\Gamma}_{k}\mathbf{\Omega}^{-1}\right) - \frac{\rho_{u}\eta_{k}}{M}\text{tr}\left(\mathbf{\Gamma}_{k}\mathbf{T}\right) \xrightarrow[M \rightarrow \infty]{\text{a.s.}} 0. 
\end{equation}
where $\mathbf{T}$ is defined in (\ref{eq:T_Approx1}). Then $\rho_{u}\eta_{k}\hat{\mathbf{g}}_{k}^{H}\boldsymbol{\Omega}_{k}^{-1}\hat{\mathbf{g}}_{k}$ is substituted by $\frac{\rho_{u}\eta_{k}}{M}\text{tr}\left(\mathbf{\Gamma}_{k}\mathbf{T}\right)$ in (\ref{eq:lemma1}), and (\ref{eq:lemma1}) is further substituted into (\ref{eq:SINR_k_UL_MMSE}) to obtain RM Approximation 1 in (\ref{eq:RMT_Approx1}).

\subsection{Proof of Theorem \ref{converg_algm1} and \ref{theorem:converg_algm3}}\label{Convergence of Algorithm 1 and 3}
Theorem \ref{converg_algm1} and \ref{theorem:converg_algm3} are proved using the framework of standard interference functions \cite{yates1995framework}. The process is as follows. Define $\mathbf{d} = [d_{1}, d_{2},..., d_{K}]^{T} \in \mathbb{R}^{K}$ and $\mathbf{f}(\mathbf{d}) = [f_{1}(\mathbf{d}),f_{2}(\mathbf{d}),..., f_{K}(\mathbf{d})]^{T} \in \mathbb{R}^{K}$, where
\begin{equation}
\label{eq:SIF_d}
	f_{k}(\mathbf{d}) = \rho'_{u,k}\hat{\mathbf{g}}_{k}^{H}\left(\alpha\sum_{k'=1}^{K}\frac{\rho'_{u,k'}u_{k'}}{d_{k'}}\mathbf{J}_{k'} + \mathbf{I}_{M}\right)^{-1}\hat{\mathbf{g}}_{k}.
\end{equation}
It is noted that (\ref{eq:SIF_d}) is equivalent to (\ref{eq:d_updation}). According to \cite[Theorem 1 and 2]{yates1995framework}, the iteration $\mathbf{d}^{(n+1)} = \mathbf{f}(\mathbf{d}^{(n)})$ will converge to a unique point $\mathbf{d}^{\star} = [d_1^{\star}, d_2^{\star},..., d_K^{\star}]^T \in \mathbb{R}^K$ for any non-negative initial point $d_{k}^{(0)}, \forall k$ if and only if $\mathbf{f}(\mathbf{d})$ is a feasible standard interference function. $\mathbf{f}(\mathbf{d})$ is a standard interference function if for all $\mathbf{d} \geq 0$ the following properties are satisfied.
 
1. Positivity: $\mathbf{f}(\mathbf{d}) > 0$,
 
2. Monotonicity: if $\mathbf{d} > \mathbf{d}'$, then $\mathbf{f}(\mathbf{d}) > \mathbf{f}(\mathbf{d}')$,

3. Scalability: For all $\zeta > 1, \zeta\mathbf{f}(\mathbf{d}) > \mathbf{f}(\zeta\bar{\mathbf{d}})$.

Here, the vector inequality $\mathbf{d} > \mathbf{d}'$ is a strict inequality in all components.

\subsubsection{Proof of Theorem \ref{converg_algm1}} In Algorithm \ref{Algorithm 1}, $\alpha = \min_{k'}(d_{k'}/u_{k'}),\, k' = 1,..., K$. Although $\alpha$ might be different for different iterations, it can be regarded as a positive constant for each iteration. We will first prove that when $\alpha$ is a positive constant, $\mathbf{f}(\mathbf{d})$ is a standard interference function.

Define $\mathbf{F} = \alpha\sum_{k'=1}^{K}(\rho'_{u,k'}u_{k'}/d_{k'})\mathbf{J}_{k'} + \mathbf{I}_{M}$.
Based on (\ref{eq:est_err_g}) and (\ref{eq:J_matrix}), it is straightforward to verify that $\mathbf{F}$ is Hermitian positive definite, then $\mathbf{F}^{-1}$ is also Hermitian positive definite, so $\mathbf{f}(\bar{\mathbf{d}}) > 0$, and the positivity property is proved. 

It is noted that the monotonicity property is equivalent to $\partial \mathbf{f(d)}/\partial \mathbf{d} > \mathbf{0}$ for all possible $\mathbf{d}$, where $\mathbf{0}$ is a matrix with each component being 0 and the matrix inequality $\partial \mathbf{f(d)}/\partial \mathbf{d} > \mathbf{0}$ is a strict inequality in all components. Denote the $(k, k'')$-th component of $\partial \mathbf{f(d)}/\partial \mathbf{d}$ as $\partial f_{k}(\mathbf{d})/\partial d_{k''}$, where $k = 1,...,K,$ and $k''= 1,...,K$. Then
\begin{align*}
\frac{\partial f_{k}(\mathbf{d})}{\partial d_{k''}}& = \frac{\partial \rho'_{u,k}\mathbf{\hat{g}}_{k}^{H}\mathbf{F}^{-1}\mathbf{\hat{g}}_{k}}{\partial d_{k''}} = \rho'_{u,k}\mathbf{\hat{g}}_{k}^{H}\frac{\partial\mathbf{F}^{-1}}{\partial d_{k''}}\mathbf{\hat{g}}_{k}\\
&=-\rho'_{u,k}\mathbf{\hat{g}}_{k}^{H}\mathbf{F}^{-1}\frac{\partial\mathbf{F}}{\partial d_{k''}}\mathbf{F}^{-1}\mathbf{\hat{g}}_{k}\\
=&\frac{\alpha(\rho'_{u,k})^{2}u_{k''}}{d_{k''}^2}\mathbf{\hat{g}}_{k}^{H}\mathbf{F}^{-1}\left(\mathbf{\hat{g}}_{k''}\mathbf{\hat{g}}_{k''}^{H} + \mathbf{B}_{k''} - \boldsymbol{\Gamma}_{k''}\right)\mathbf{F}^{-1}\mathbf{\hat{g}}_{k}.
\end{align*} 
Since $\mathbf{\hat{g}}_{k''}\mathbf{\hat{g}}_{k''}^{H} + \mathbf{B}_{k''} - \boldsymbol{\Gamma}_{k''}$ is Hermitian positive definite and $d_{k''}^{2} \geq 0$ for all $k''$, we obtain $\partial f_{k}(\mathbf{d})/\partial d_{k''} > 0, \forall k$ and $\forall k''$. Thus $\mathbf{f(d)}$ satisfies the monotonicity property.

The proof for $\mathbf{f(d)}$ satisfying the scalability property is as follows. Define $\mathbf{\bar{J}} = \alpha\sum_{k'=1}^{K}(\rho'_{u,k'}u_{k'}/d_{k'})\mathbf{J}_{k'}$, then $f_{k}(\zeta\mathbf{d})$ can be written as
\begin{align*}
f_k(\zeta \mathbf{d}) =& \rho'_{u,k}\mathbf{\hat{g}}_{k}^{H}\left(\frac{1}{\zeta}\mathbf{\bar{J}} +  \mathbf{I}_{M}\right)^{-1}\mathbf{\hat{g}}_{k}\\
=& \rho'_{u,k}\mathbf{\hat{g}}_{k}^{H}\left(\mathbf{Q}\frac{\boldsymbol{\Lambda}}{\zeta} \mathbf{Q}^{-1} +  \mathbf{Q}\mathbf{Q}^{-1}\right)^{-1}\mathbf{\hat{g}}_{k}\\
=& \rho'_{u,k}\mathbf{\hat{g}}_{k}^{H}\mathbf{Q}\left(\zeta\boldsymbol{\Lambda}^{-1} + \mathbf{I}_{M}\right) \mathbf{Q}^{H}\mathbf{\hat{g}}_{k},\\
\end{align*} 
where $\mathbf{Q}\boldsymbol{\Lambda} \mathbf{Q}^{-1}$ is the eigenvalue decomposition of $\mathbf{\bar{J}}$. Following the same approach, $\zeta f_k(\mathbf{d})$ can be written as
\begin{align*}
\zeta f_k(\mathbf{d}) = \rho'_{u,k}\mathbf{\hat{g}}_{k}^{H}\mathbf{Q}\zeta\left(\boldsymbol{\Lambda}^{-1} + \mathbf{I}_{M}\right) \mathbf{Q}^{H}\mathbf{\hat{g}}_{k}.
\end{align*}
Since $\mathbf{\bar{J}}$ is Hermitian positive definite, each component of $\diag\{\boldsymbol{\Lambda}^{-1} + \mathbf{I}_{M}\}$ is positive. Based on this fact, we obtain $\diag\{\zeta\left(\boldsymbol{\Lambda}^{-1} + \mathbf{I}_{M}\right)\} > \diag\{\zeta\boldsymbol{\Lambda}^{-1} + \mathbf{I}_{M}\} \ \text{for}\ \zeta > 1$, where the vector inequality is a strict inequality in all components. Since both $\zeta\left(\boldsymbol{\Lambda}^{-1} + \mathbf{I}_{M}\right)$ and $\zeta\boldsymbol{\Lambda}^{-1} + \mathbf{I}_{M}$ are positive definite, we obtain $\zeta f_k(\mathbf{d}) > f_k(\zeta\mathbf{d}), \forall k\ \text{for}\ \zeta > 1$. Thus it is proved that $\mathbf{f(d)}$ satisfies the scalability property and we also prove that $\mathbf{f}(\mathbf{d})$ is a standard interference function when $\alpha$ is a positive constant. 

The convergence of Algorithm \ref{Algorithm 1} is proved as follows. For a given network realization, denote $R^{\star}$ as the max-min per-device rate which can be achieved by all served active devices. If $R^{\star}$ is known in advance, we can teat $R^{\star}$ as a target rate and the value of $\alpha$ corresponding to it is denoted as $\alpha^{\star}$. It is noted that when $\alpha = \alpha^{\star}$ in (\ref{eq:SIF_d}), the iteration $\mathbf{d}^{(n+1)} = \mathbf{f}(\mathbf{d}^{(n)})$ will converge to a unique solution of UL power control coefficients, $\boldsymbol{\eta}^{\star}$ where $\max_k(\eta_k^{\star}) = 1,\ k = 1, 2, ..., K$. 

On the other hand, two operations are implemented by step 2 of Algorithm \ref{Algorithm 1} for every iteration. We define $\alpha^{(n)} = \min_{k'}(d_{k'}^{(n - 1)}/u_{k'})$. Replacing $\alpha$ by $\alpha^{(n)}$ in (\ref{eq:SIF_d}) we obtain $\mathbf{f}^{(n)}(\mathbf{d})$. The first operation is the computation of $\alpha^{(n)}$ which is actually an update of power control coefficients which are expressed as 
\begin{equation}
	\label{eq:eta_update}
	\eta_k^{(n)} = \frac{\min_{k'}\left(d_{k'}^{(n-1)}/u_{k'}\right)}{d_{k}^{(n-1)}/u_k},\ \forall k,\ k' = 1, 2,..., K.
\end{equation}
The second operation is using $\alpha^{(n)}$ (i.e., $\boldsymbol{\eta}^{(n)} = [\eta_1^{(n)}, \eta_2^{(n)},...,\eta_K^{(n)}]$) to compute $\mathbf{d}^{(n)}$ by (\ref{eq:d_updation}). Note that $\mathbf{f}^{(n)}(\mathbf{d})$ is a standard interference function as proved earlier, it has a unique solution which is denoted as $(\boldsymbol{\eta}^{(n)})^{\star}$. 
It is also noted that if $\alpha^{(n)}$ is not equal to $\alpha^{\star}$, $\max_k((\eta^{(n)}_k)^{\star})$ is also not equal to 1. 
The first operation makes $\max_k(\eta_{k}^{(n)}) = 1$, which leads to $ \alpha^{(n)} \xrightarrow[n \rightarrow \infty]{\text{a.s.}} \alpha^{\star}$ (i.e., $ (\boldsymbol{\eta}^{(n)})^{\star} \xrightarrow[n \rightarrow \infty]{\text{a.s.}} \boldsymbol{\eta}^{\star}$).  
On the other hand,  the second operation  makes $\boldsymbol{\eta}^{(n)}$ converges to $(\boldsymbol{\eta}^{(n)})^{\star}$ since $\mathbf{f}^{(n)}(\mathbf{d})$ is a standard interference function. Thus we conclude that by the iteration of Algorithm \ref{Algorithm 1}, $\boldsymbol{\eta}^{(n)}$ converges to $\boldsymbol{\eta}^{\star}$.

\subsubsection{Proof of Theorem \ref{theorem:converg_algm3}}\label{Convergence of Algorithm 3}
In Algorithm \ref{Algorithm 3}, $\alpha = S_t/(1+S_t)$. 
Replacing $\rho'_{u,k'}u_{k'},\  \forall k'$ in (\ref{eq:SIF_d}) by $\rho'_{u,k'},\ \forall k'$, we obtain $\bar{\mathbf{f}}(\mathbf{d})$ which is equivalent to the formula used in Step 2 of Algorithm \ref{Algorithm 3}. Since $\alpha$ is a positive constant, using the same approach for $\mathbf{f}(\mathbf{d})$, $\bar{\mathbf{f}}(\mathbf{d})$ can also prove to be a standard interference function. Thus Theorem \ref{theorem:converg_algm3} is proved.

\subsection{Proof of Theorem \ref{theorem: converg_algm2} and \ref{theorem:converg_algm4}}\label{Convergence of Algorithgm 2 and 4}
Theorem \ref{theorem: converg_algm2} and \ref{theorem:converg_algm4} can also be proved using the framework of standard interference functions \cite{yates1995framework}.
The process is as follows. 
Define $\mathbf{l} = [l_{1}, l_{2},..., l_{M}]^{T} \in \mathbb{R}^{M}$ and $\mathbf{q}(\mathbf{l}) = [q_{1}(\mathbf{l}),q_{2}(\mathbf{l}),..., q_{M}(\mathbf{l})]^{T} \in \mathbb{R}^{M}$, where
\begin{equation}
\begin{aligned}
\label{eq:SIF_RM1}
\small
&q_{m}(\mathbf{l}) = \\ &\Bigg[\Bigg(\frac{\alpha}{M}\sum_{k'=1}^{K}\frac{\rho_{u}u_{k'}}{\text{tr}\left(\mathbf{\Gamma}_{k'}\diag\{\mathbf{l}\}\right)}\Big(\mathbf{B}_{k'}- \frac{\xi_{k'}\mathbf{\Gamma}_{k'}}{1 + \xi_{k'}}\Big) + \frac{\mathbf{I}_{M}}{M}\Bigg)^{-1}\Bigg]_{mm}
\end{aligned}
\end{equation}
It is noted that (\ref{eq:SIF_RM1}) is equivalent to (\ref{eq:T_update}). 
Here $\xi_{k'}$ can either depend on $\alpha$ as in Algorithm \ref{Algorithm 2} or be a positive constant as in Algorithm \ref{Algorithm 4}, and $[\mathbf{A}]_{mm}$ means the $m$-th diagonal element of matrix $\mathbf{A}$. 
\subsubsection{Proof of Theorem \ref{theorem: converg_algm2}} In Algorithm \ref{Algorithm 2}, $\alpha = \min_{k'}\left(\text{tr}\left(\nu_{k'}\mathbf{\Gamma}_{k'}\diag\{\mathbf{l}\}\right)/u_{k'}\right), \,k' = 1,..., K$. 
The prove process is similar to that for Algorithm \ref{Algorithm 1}. 
We will first prove that when $\alpha$ is a positive constant, $\mathbf{q}(\mathbf{l})$ is a standard interference function.

According to (\ref{eq:est_err_g}), $\beta_{mk}-\gamma_{mk} \geq 0, \forall m, \forall k$, so $\left(\mathbf{B}_{k'}- \frac{e_{k'}\mathbf{\Gamma}_{k'}}{1 + e_{k'}}\right) > 0$. 
Since $\xi_{k'},\ \forall k'$ in Algorithm \ref{Algorithm 2} are also non-negative, we conclude that $\mathbf{q}(\mathbf{l}) > 0$ for all $\mathbf{l} \geq 0$ and the positivity property is proved.


Since $\mathbf{B}_{k'}$, $\boldsymbol{\Gamma}_{k'}$, and $\mathbf{I}_M$ are diagonal matrices, $q_m(\mathbf{l})$ can be easily rewritten as
\begin{equation}
\label{eq:q_m_L}
q_m(\mathbf{l}) = \frac{M}{\alpha\sum_{k'=1}^{K}\frac{\rho_{u}u_{k'}}{\text{tr}\left(\boldsymbol{\Gamma}_{k'}\diag\{\mathbf{l}\}\right)}\left(\beta_{mk'}-\frac{\xi_{k'}\gamma_{mk'}}{1+\xi_{k'}}\right) + 1}.
\end{equation}
Thus it is easy to verify that for all $\zeta > 1$, $\zeta q_m(\mathbf{l}) > q_m(\zeta \mathbf{l}), \forall m$ and the scalabiltity property is proved.

As for the monotonicity, if $\mathbf{l} > \mathbf{l}'$, then $\frac{\rho_{u}u_{k'}}{\text{tr}\left(\boldsymbol{\Gamma}_{k'}\diag\{\mathbf{l}\}\right)} < \frac{\rho_{u}u_{k'}}{\text{tr}\left(\boldsymbol{\Gamma}_{k'}\diag\{\mathbf{l'}\}\right)},\ \forall k'$. 
Since we assume $\alpha$ is a positive constant, so does $\xi_{k'},\ \forall k'$. Then $q_m(\mathbf{l}) > q_m(\mathbf{l}'), \forall m$ (i.e., $\mathbf{q(l)} > \mathbf{q(l')}$) and the monotonicity property is proved.
Thus we prove that $\mathbf{q(l)}$ is a standard interference function when $\alpha$ is a positive constant.

Using the same approach as that for Algorithm \ref{Algorithm 1}, we can obtain that by the iteration in Step 2 of Algorithm $\ref{Algorithm 2}$, $ \alpha^{(n)} \xrightarrow[n \rightarrow \infty]{\text{a.s.}} \alpha^{\star}$ (i.e., $ (\boldsymbol{\eta}^{(n)})^{\star} \xrightarrow[n \rightarrow \infty]{\text{a.s.}} \boldsymbol{\eta}^{\star}$), and $\boldsymbol{\eta}^{(n)}$ converges to $(\boldsymbol{\eta}^{(n)})^{\star}$. Here $\alpha^{(n)} = \min_{k'}\left(\text{tr}\left(\nu_{k'}\mathbf{\Gamma}_{k'}\mathbf{T}^{(n-1)}\right)/u_{k'}\right), \,k' = 1,..., K$ and $\eta_k^{(n)}$ is defined as:
\begin{align*}
	\eta_k^{(n)} =  \frac{\min_{k'}\left(\text{tr}\left(\nu_{k'}\mathbf{\Gamma}_{k'}\mathbf{T}^{(n-1)}\right)/u_{k'}\right)}{\text{tr}\left(\nu_{k}\mathbf{\Gamma}_{k}\mathbf{T}^{(n-1)}\right)/u_{k}},\ \forall k,\ k' = 1,..., K.
\end{align*}
Thus we conclude that by the iteration of Algorithm \ref{Algorithm 2}, $\boldsymbol{\eta}^{(n)}$ converges to $\boldsymbol{\eta}^{\star}$.

\subsubsection{Proof of Theorem \ref{theorem:converg_algm4}} In Algorithm \ref{Algorithm 4}, $\alpha = S_{t}M/\rho_{u}$ or $S_{t}M/(\rho_{u} u_g)$. 
Replacing $\rho_{u}u_{k'},\  \forall k'$ in (\ref{eq:SIF_RM1}) by $\rho_{u}$, we obtain $\bar{\mathbf{q}}(\mathbf{l})$ which is equivalent to the formula used in step 2 of Algorithm \ref{Algorithm 4}. Since $\alpha$ and $\xi_{k'},\ \forall k'$ are positive constants in Algorithm \ref{Algorithm 4}, using the same approach for $\mathbf{q}(\mathbf{l})$, $\bar{\mathbf{q}}(\mathbf{l})$ can also prove to be a standard interference function. Thus Theorem \ref{theorem:converg_algm4} is proved.

\bibliographystyle{IEEEtran}
\bibliography{references}
\end{document}